\newtheorem{theorem}{Theorem}
\newtheorem{corollary}{Corollary}
\newtheorem{proposition}{Proposition}
\newtheorem{lemma}{Lemma}
\newtheorem{definition}{Definition}
\newcommand\footnoteref[1]{\protected@xdef\@thefnmark{\ref{#1}}\@footnotemark}
\title{Vulnerabilities of Single-Round Incentive Compatibility in Auto-bidding: Theory and Evidence from ROI-Constrained Online Advertising Markets}
\author{
	Juncheng Li
	\And
	Pingzhong Tang
	\affiliations
	Tsinghua University\\
	\emails
	\{lijuncheng13, kenshinping\}@gmail.com
	}
\newcommand{\abs}[1]{\left| #1 \right|}
\newcommand{\bigparen}[1]{\left( #1 \right)}
\newcommand{\bigbraces}[1]{\left\{ #1 \right\}}
\newcommand{\bigbrackets}[1]{\left[ #1 \right]}
\newcommand{\argmax}{\operatornamewithlimits{arg\,max}}
\begin{document}

\maketitle

\begin{abstract}
    Most of the work in the auction design literature assumes that bidders behave rationally based on the information available for every \textit{individual}  auction, and the revelation principle enables designers to restrict their efforts to \textit{incentive compatible} (IC) mechanisms.
    However, in today's online advertising markets, one of the most important real-life applications of auction design, the data and computational power required to bid optimally are only available to the platform, and an advertiser can only participate by setting performance objectives and constraints for its proxy \textit{auto-bidder} provided by the platform.
    The prevalence of auto-bidding necessitates a review of auction theory.
    In this paper, we examine the markets through the lens of ROI-constrained value-maximizing campaigns.
    We show that second price auction exhibits many undesirable properties (computational hardness, non-monotonicity, instability of bidders' utilities, and interference in A/B testing) and loses its dominant theoretical advantages in single-item scenarios.
	In addition, we make it clear how IC and its runner-up-winner interdependence contribute to each property.
    We hope that our work could bring new perspectives to the community and benefit practitioners to attain a better grasp of real-world markets.
\end{abstract}

\section{Introduction}

Auto-bidding has become a corner stone of modern advertising markets.
For better end-to-end performance and customer experience, platforms now provide algorithmic agents to set fine-grained bids for advertisers, who only need to submit campaign-level optimization objectives and constraints.
As a result, the community has seen a surge of publications on auto-bidding in recent years.

One of the most notable features of the auto-bidding paradigm is the change of roles played by advertisers and platforms.
Traditionally advertisers are assumed to bid rationally for each individual ad slot, and thus the real-time bidding (RTB) literature focuses on developing algorithms for advertisers to maximize their objectives subject to different constraints and auction rules, while the auction design literature, anticipating the best response of advertisers or RTB algorithms, explores new auction rules to optimize various goals of the platform.
However, in auto-bidding markets, most of the technical components
are under the management of the platform:
auto-bidders provided by the platform will compete with each other, based on valuations predicted by the platform, under auction rules that are also designed by the platform.
Such a greater control over the market imposes a more diverse set of requirements upon the designer.
The auction design literature has long been focusing on incentive compatibility and welfare/revenue guarantee, but the desiderata of an auto-bidding mechanism are far beyond these two.
Google AdSense's \textit{partial} shift at 2021 from second to first price auction\footnote{
	In November 17, 2021, Google moves the AdSense auction for Content, Video, and Games from second price auction to first price, while keeping Search and Shopping as before \cite{google2021shift}.}
might serve as an exemplary demonstration of this perplexity.

Despite the evolution of ecosystems, first and second price auction remain as the dominant mechanisms used in practice.
In this paper, we study the mathematical model abstracted from real-world first and second price auction markets with ROI-constrained auto-bidders.
In addition to theoretic interests, our work are also motivated by real-world observations.
Traditional interpretations of some phenomena may confuse both sides of the market and lead to business choices detrimental in the long run.
Our goal is thus to develop a deeper understanding of auto-bidding at the market scale, and provide practitioners a more holistic view to facilitate decision making.
We will show, through a series of theoretical and empirical results, that the dominant  advantage of second price auction over first price is, in many regards, reversed in the world of auto-bidding.
It will be made clear throughout the journey how IC, instead of simplifying the reasoning of both bidders and auctioneers, unnecessarily complicates the game in a profound way.

\subsection{ROI-Constrained Auto-bidding  Markets and Related Works}

Previously much effort was spent on the study of auto-bidders with budget constraints \cite{karande2013optimizing,charles2013budget,balseiro2019learning,gao2020first,chen2021complexity,balseiro2021budget,conitzer2021pacing,conitzer2021multiplicative}, but Return-on-Investment (ROI), another widely adopted auto-bidding option, received much less attention until very recently \cite{golrezaei2021auction,balseiro2021landscape,deng2021towards}.
For campaigns with ROI-constraints, advertisers should submit either a target ROI, a target Return-on-Ad-Spend or a target Cost-per-Action (tROI/tROAS/tCPA).\footnote{\label{footnote:terminology}See Appendix \ref{app:terminologies} for more notes on terminology.}
The objective of the auto-bidder is then to maximize the acquired value while keeping the average spend for each unit of value below the target threshold.
ROI-constrained auto-bidders are dominating in market share in many regions of the world. 
There is also recent empirical evidence \cite{golrezaei2021auction} from Google AdX that advertisers are indeed ROI-constrained.

An advertiser's value for an ad slot (to which we will refer as a generic \textit{good} throughout the rest of the paper) is typically given by the product of the conversion-rate (predicted by the platform) and the value of each conversion.
For ROI-constrained campaigns, the latter part is the amount of money that the advertiser is willing to pay the platform for each (unit of) conversion.\footnoteref{footnote:terminology}
Truthful bidding for each individual auction is ex-post optimal for non-constrained quasi-linear utility-maximizing bidders.
But for ROI-constrained value-maximizers, it is possible to raise bids above values to win more while keeping the average spend of each conversion below the threshold.
In this paper, auto-bidders will take values as given and be restricted to the \textit{multiplicative pacing} strategy,\footnoteref{footnote:terminology} wherein the bids of each bidder could only be generated through scaling the values for all goods by a common multiplier of its choice.
There is always a multiplier that is ex-post bidder-optimal in second price auction, and the strategy is one of the most implemented in the industry regardless of auction formats (in particular, it remains popular in first price auction\footnote{Possibly surprisingly, we will show in Section \ref{sec:first_vs_second} that it will \textit{not} bring incentive issues for first price auction and it is in the interests of a platform to enforce so. In contrast, both bidders and sellers have incentives to deviate from it in second price auction (Appendix \ref{app:arbitrary_bid}).}).

Table \ref{tab:one} positions our market model within the growing literature on auto-bidding.
Existing works differ in the modeling of advertisers' valuations and utilities.
Traditionally the valuation of each bidder is modeled as being drawn from a stochastic process independently of each other.
We adopt a framework first studied by Conitzer et al. \shortcite{conitzer2021multiplicative,conitzer2021pacing}, where the valuation is given as the (fixed) input and can be viewed as a discretized density function or a realized sample of an arbitrary joint distribution.
In particular, the correlation prevalent in advertising can thus be captured.
For utilities, previous research mainly focuses on budget-constraints,
and our paper complements the recent line of work on ROI-constraints.
Babaioff et al. \shortcite{babaioff2021non} also consider ROI, but they model bidding behaviors with respect to the marginal ROI of each individual auction rather than the average ROI across.
\begin{table*}%
	\centering
	\begin{tabular}{p{3.8cm}p{6cm}p{6cm}}
		\toprule
		\multirow{2}{*}{Utility model}
		& \multicolumn{2}{l}{Valuations}
		\\ \cmidrule(l){2-3}
		& Deterministic, correlated
		& Stochastic, independent  \\
		\midrule
		ROI-constrained $\qquad$ value-maximizer
		& Our model, Balseiro et al. \shortcite{balseiro2021robust} & Balseiro et al. \shortcite{balseiro2021landscape}, Golrezaei et al. \shortcite{golrezaei2021auction} \\
		Budget-constrained utility-maximizer
		& Conitzer et al. \shortcite{conitzer2021pacing,conitzer2021multiplicative}, Chen et al. \shortcite{chen2021complexity}, Gao and Kroer \shortcite{gao2020first} & Balseiro and Gur \shortcite{balseiro2019learning}, Balseiro et al.  \shortcite{balseiro2021budget} \\
		ROI\&budget-constrained value-maximizer
		& Deng et al. \shortcite{deng2021towards}, Aggarwal et al.  \shortcite{aggarwal2019autobidding} \\
		\bottomrule
	\end{tabular}
	\caption{Common auto-bidder types and valuation assumptions in the literature. \label{tab:one}}
\end{table*}%

Deng et al. \shortcite{deng2021towards} and its follow-up work \cite{balseiro2021robust} study a model that largely coincide with ours.
Their goal is to design mechanisms having revenue and welfare guarantees when the designer has fairly accurate signals on the valuation.
Though we will also report some results on revenue and welfare, we emphasize that, for today's large scale auto-bidding systems,  auction (combined with auto-bidding strategies) acts more like an efficient distributed algorithm to match demand with supply and compute market-clearing prices.
With this mentality, our work covers a broader range of properties and focuses heavily on their possible practical impacts on advertisers and platforms.
We also differ in the adopted solution concept.
We allow fractional allocation and incorporate the tie-breaking rule into the solution concept, which is not only well-motivated, but also guarantees the existence of equilibrium even though the market is discrete and discontinuous.
In contrast, Balseiro et al. \shortcite{balseiro2021robust} break ties  lexicographically and more complex auctions like VCG and GSP are considered there. So they choose a weaker solution concept called undominated bids and avoid the discussion of existence.
Nonetheless, this distinction diminishes for large markets as the result of a single auction becomes negligible.

Auto-bidding or RTB algorithms have been studied for a long time.
Such works typically assume a \textit{stationary} environment and optimize various objectives for a \textit{single} advertiser.
They fail to capture other bidders' responses invoked by the action of the focal agent, and the resulting equilibrium outcome may not fulfill the initial design goal if all the bidders implement the same strategy (see, e.g., Appendix \ref{subsec:ab_testing_results}).
One notable exception is the work by Aggarwal et al. \shortcite{aggarwal2019autobidding}, who were aware of this problem and tried to prove the existence of an equilibrium.
But their treatment of equilibrium is incomplete (see a discussion in Appendix \ref{app:more_related_works}).

Researchers have shown that the computation of equilibrium bears some inherent hardness with budget-constrained multiplicative pacing bidders.
For computing any equilibrium, we improve previous result to show that it is PPAD-hard to approximate within \textit{constant} parameters (for budget-constraints, it was shown to be hard to approximate within \textit{polynomially small} parameters \cite{chen2021complexity}).
Our result is built on a more concise reduction quite distinct from the previous one.
We also improve the NP-hardness result of optimizing revenue/welfare \cite{conitzer2021multiplicative} to APX-hardness.
Moreover, the source of these hardness and the vulnerabilities of IC are demonstrated more clearly with our constructions, which we believe could help both researchers and practitioners better extrapolate our techniques and insights.

\subsection{Single-round Incentive Compatibility}

The classic revelation principle ensures us that, when designing single-item auctions, any implementable allocation rule could be implemented in an incentive compatible way by directly eliciting bidders' private information (in most cases only the valuations to the item to be sold are needed).
By focusing on IC mechanisms, the designer loses nothing while bidders could be prevented from strategic behaviors.
In comparison, the characterization and computation of equilibrium in first price auction is notoriously hard (see, e.g., \cite{filos2021complexity} and the survey therein).

Though second price auction is only IC for single-item auctions, it possesses another fascinating property in auto-bidding markets: from a single bidder's perspective, each individual auction comes with a \textit{winning price} independent of its own bid (which is essentially an equivalent statement of IC).
As a consequence, the (ex-post/offline) task of each auto-bidder is a linear program for bidders with linear constraints like ROI and budget.
The optimal solution can be well approximated by the multiplicative pacing strategy, which is simple to implement and performs well even in the online setting \cite{balseiro2019learning,balseiro2022best}.
This provides second price auction an illusory strategyproofness that could be called \textit{ex-post IC}: every advertiser could truthfully report its tROI and happily accept the equilibrium bids given by its proxy auto-bidder as deviating unilaterally will not bring extra profit \textit{at the moment}.

Nonetheless, even long before the advent of the auto-bidding era, experiments have revealed that bidders' behaviors in second price auction are far from truthful (see, e.g., \cite{kagel2011auctions}).
Recall that, in second price auction, the price is \textit{set} by the runner-up, but \textit{paid} by the winner.
In some senses, both the winner and the runner-up care \textit{little} about the absolute magnitudes of their \textit{own} bids: only the ranking (first and second) is important.
This can also be seen from the linear program optimizing utility for the auto-bidder (see, e.g., \cite{aggarwal2019autobidding}), where no decision variable denoting bids appears and the solution only prescribes whether each auction should be won or not.
However, the bid of the runner-up always means \textit{a lot} to the winner.
A well-known enemy of IC due to this \textit{runner-up-winner interdependence} is externality, i.e., the utility of a bidder depends on the allocation and payment of not only itself, but also others.

It should not be surprising that externality makes manipulation worthwhile since IC is not designed for the job.
In our auto-bidding markets, the objective of each auto-bidder is defined clearly without externality.\footnote{Another common scenario where IC fails is that bidders do not have complete knowledge of their valuations. This is not an issue either in our case.}
However, auto-bidders adjust bids based on the overall performance across \textit{all} auctions.
Even though at equilibrium changing bids unilaterally could not benefit the manipulator immediately, it may trigger a cascade of responses that shift the whole market state.
This creates a kind of externality that is  \textit{internalized} into the outcome of the shifted equilibrium.
It is worthwhile to point out that, for some results in this paper, it seems to be the multiplicative pacing strategy that leads to a property.
Actually it is irrelevant of the specific bidding strategy or even the ROI-constraint: for simultaneous auctions with single-round IC, bidders will always bid strategically across all auctions,\footnote{Even a non-constrained utility-maximizer could manipulate if one of its opponents has constraints. Truthful reporting is secure only when all the other bidders are insensitive to each other's bid.} which is enough to establish those results.

Letting the (bids of) opponents determine the payment of the winner is the key to establish IC for single-item auctions, but it is also the key to open the Pandora's box in auto-bidding markets, as will be detailed in the remainder of this paper.

\subsection{Contributions}
In this paper, we consider the auto-bidding market where ROI-constrained value-maximizing bidders compete with each other in simultaneous auctions.
Our main focus is second price auction: behaviors of auto-bidders within first price auction markets will be discussed at the very end.

We start by formulating our own solution concept of the market (and its approximate version), named \textit{auto-bidding equilibrium}, since a pure Nash equilibrium may not always exist.
The equilibrium reasonably captures the expected steady-state that the auto-bidders intend to reach collectively, and its guaranteed existence puts our study on a solid theoretical footing.
Our most important results are as follows:
\begin{itemize}
	\item It is PPAD-hard to find an approximate auto-bidding equilibrium within constant parameters.
	\item It is APX-hard to optimize revenue or welfare over all auto-bidding equilibria.
	\item Non-monotonicity: an advertiser who raises its tROI/tROAS (or equivalently, lowers the tCPA) at the equilibrium could end up with a higher revenue after deviation (through a natural equilibrium transition process to resolve multiplicity).
\end{itemize}
Besides their own significance, the constructions used in establishing these results reveal many crucial structures of the market and highlight the role played by IC.
They serve as the foundation to comprehend and interpret other characteristics of the market.

We proceed to explore several practical concerns of great consequence.
For advertisers, we show that the market suffers severe utility instability and input sensitivity issues.
For platforms, we demonstrate that biases exist broadly in the widely-used A/B testing.
Finally, we give a comprehensive comparison between first and second price auction, and show that first price auction is generally exempted from the above undesirable properties and performs better.
As an application of our results, we give our guess about why Google AdSense moves from second to first price auction in a partial manner.


\section{Markets and Equilibria}
\label{sec:auto_bidding_equilibrium}

We consider a market where a set of bidders $N = \{1, \dots, n\}$ compete for a set of divisible goods $M = \{1, \dots, m\}$.
Without loss of generality, the tROIs of all bidders are set to zero (equivalently, tROAS of one),\footnote{See Appendix \ref{app:equilibrium_definition_rationale} on why this is WLOG. Later we may sometimes \textit{change} the tCPA of a bidder by a factor $\lambda$, by which we mean all the valuations of this bidder is scaled by $\lambda$.} i.e., each bidder's spend should be no more than its acquired value.
We use $v_{i, j}$ to denote the value of bidder $i$ to good $j$.
For each good $j$, there is at least one bidder $i$ such that $v_{i, j} > 0$.
The platform simultaneously runs a single-item second price auction for every good.
Auto-bidders are restricted to apply multiplicative pacing strategies: the action space of bidder $i$ is the set of undominated multipliers $\alpha_i \in [1, A]$,\footnote{Fixing other bidders' bids, $\alpha_i = 1$ always weakly dominates any $\alpha_i < 1$. The cap $A$ can safely be replaced by $+\infty$ for most cases. See more discussion in Appendix \ref{app:equilibrium_definition_rationale}.} and its bid for good $j$ is $\alpha_i v_{i, j}$.
Multiplicative pacing is ex-post bidder-optimal, as shown in Proposition \ref{proposition:expost_buyer_optimal}.
The omitted proof follows from a linear programming formulation of the ROI-constrained value maximization problem,
and the result is widely known in both the literature and the industry.
\begin{proposition}\label{proposition:expost_buyer_optimal}
	Suppose that bidders can bid arbitrarily across auctions.
	Holding all other bidders' bids, each bidder has a best response wherein bids are generated by scaling its valuations of all goods by a uniform multiplier, given that it could freely choose to win any fraction of a good of which it is a tied winner.
\end{proposition}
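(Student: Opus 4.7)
The plan is to reduce bidder $i$'s best-response problem to a fractional knapsack LP and show its optimizer has a threshold structure corresponding exactly to uniform bid scaling. First, fix every other bidder's bids and let $p_j$ denote the maximum bid from $N \setminus \{i\}$ on good $j$; this is the second price bidder $i$ would pay if it wins good $j$. Since bidder $i$ can drive its bid arbitrarily high (to win fully), arbitrarily low (to lose), or set it exactly to $p_j$ (becoming a tied winner with free choice of fraction), its effective decision variable is simply $x_j \in [0,1]$ for each $j$. The best-response problem then becomes the LP
\begin{equation*}
\max_{x \in [0,1]^m} \; \sum_{j \in M} x_j v_{i,j} \quad \text{s.t.} \quad \sum_{j \in M} x_j (p_j - v_{i,j}) \leq 0,
\end{equation*}
which is always feasible (take $x \equiv 0$) and bounded.

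Next, I would show by an exchange argument that any optimum has a threshold structure in the ratio $v_{i,j}/p_j$. Suppose $x^*$ is optimal and there exist $j, k$ with $x_j^* > 0$, $x_k^* < 1$, and $v_{i,j}/p_j < v_{i,k}/p_k$ (treating $v/0$ as $+\infty$). Transfer $\varepsilon > 0$ of payment mass from $j$ to $k$: decrease $x_j^*$ by $\varepsilon/p_j$ and increase $x_k^*$ by $\varepsilon/p_k$. The total payment $\sum_j x_j p_j$ is unchanged, while the total value $\sum_j x_j v_{i,j}$ strictly increases by $\varepsilon(v_{i,k}/p_k - v_{i,j}/p_j) > 0$, so the ROI constraint remains satisfied and the objective improves, contradicting optimality. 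Consequently, there is a threshold $\tau \geq 0$ such that $v_{i,j}/p_j > \tau$ forces $x_j^* = 1$ and $v_{i,j}/p_j < \tau$ forces $x_j^* = 0$, with only the "boundary" goods $v_{i,j}/p_j = \tau$ possibly carrying fractional mass.

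Finally, setting the uniform multiplier $\alpha = 1/\tau$ (with the convention $\alpha = +\infty$ if $\tau = 0$, realized as any sufficiently large scalar), the bidder's bid $\alpha v_{i,j}$ beats, equals, or falls below $p_j$ precisely according to $v_{i,j}/p_j > \tau$, $= \tau$, or $< \tau$. Hence the allocation produced by uniform scaling to multiplier $\alpha$, combined with the tie-breaking freedom on boundary goods, reproduces $x^*$ exactly and therefore attains the optimum of the LP; this is the desired uniform-multiplier best response. The main delicate point I expect is the clean handling of edge cases (goods with $p_j = 0$, goods with $v_{i,j} = 0$, and the degenerate choices $\tau \in \{0, +\infty\}$), together with verifying that the fractional tie-breaking on the single threshold level is indeed sufficient to realize the fractional coordinates of $x^*$; all of these reduce to book-keeping once the threshold structure is in place.
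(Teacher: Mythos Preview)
Your proposal is correct and follows exactly the approach the paper sketches: the paper's proof is a single sentence stating that ``the proof follows from a linear programming formulation of the ROI-constrained value maximization problem,'' and you have written out precisely that LP and extracted its threshold structure via a clean exchange argument. The edge cases you flag ($p_j=0$, $v_{i,j}=0$, $\tau\in\{0,+\infty\}$) are indeed just bookkeeping, so nothing further is needed.
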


To complete the picture, one's first intuition may be to specify a tie-breaking rule, make the allocation $x \in [0, 1]^{n \times m}$ (where $x_{i, j}$ is the fraction of good $j$ allocated to bidder $i$) uniquely determined by $\alpha$, define bidder's utility as the ROI-constrained valuation:
\begin{displaymath}
	u_i(\alpha) =
	\left\{
	\begin{array}{ll}
		\sum_j  x_{i, j} v_{i, j}, & \text{if } \sum_{j} x_{i, j} p_j \leq \sum_{j} x_{i, j} v_{i, j};
		\\
		-\infty, & \text{otherwise;}
	\end{array}
	\right.
\end{displaymath}
and study the pure Nash equilibrium (PNE) of the game.
However, in Appendix \ref{app:equilibrium_definition_rationale}, we will give an example where, no matter how ties are broken, a PNE does not exist.
On closer inspection, we find that the steady-state of the market can be captured instead by the solution concept defined below.


\begin{definition} \label{def:autobidding_equilibrium}
	An \textbf{auto-bidding equilibrium} $(\alpha, x)$ consists of multipliers $\alpha \in [1, A]^n$ and allocation $x \in [0, 1]^{n \times m}$ such that
	\begin{itemize}
		\item bidders with the highest bid win the good:
		if $x_{i, j} > 0$, $\alpha_i v_{i, j} = \max_k \alpha_k v_{k, j}$ for all $i, j$;
		\item winner pays the second price:
		if $x_{i, j} > 0$, then $p_j = \max_{k \neq i} \alpha_k v_{k, j}$ for all $i, j$;
		\item full allocation of goods:
		$\sum_i x_{i, j} = 1$ for all $j$;
		\item ROI-feasible:
		$\sum_{j} x_{i, j} p_j \leq \sum_{j} x_{i, j} v_{i, j}$ for all $i$;
		\item maximal pacing:
		unless $\alpha_i = A$, $\sum_{j} x_{i, j} p_j = \sum_{j} x_{i, j} v_{i, j}$ for all $i$.
	\end{itemize}
\end{definition}

The auction rules and ROI-constraints are directly imposed by the first four conditions, while the best responses among bidders are encoded in the maximal pacing condition in a less straightforward way.
To see this, note that, from bidder $i$'s perspective, given the winning price of each good, $\alpha_i$ acts as a marginal-ROI threshold:
it will win all goods in whole with a marginal ROI strictly larger than $\frac{1}{\alpha_i} - 1$ and lose those with ROI strictly lower.
At any auto-bidding equilibrium, if $\alpha_i > 1$, $\alpha_i$ (and the resulting vector of bids with $b_i = \alpha_i v_{i, j}$) is exactly a best response since the marginal ROI of any good lost or tied is strictly lower than zero, and winning anymore will definitely violate the ROI-constraint.
If, however, $\alpha_i = 1$ and bidder $i$ is only allocated a \textit{fraction} of some good as in the example given in Appendix \ref{app:equilibrium_definition_rationale}, $\alpha_i$ is technically not a best response (for the normal form game) but the maximal pacing condition is still satisfied.
For such bidders, their opponents could always oscillate their multipliers around the equilibrium point to achieve the corresponding stable allocation.

\begin{theorem} \label{thm:existence}
	An auto-bidding equilibrium always exists.
\end{theorem}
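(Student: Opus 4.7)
The plan is to apply Kakutani's fixed-point theorem to a best-response correspondence on the compact convex domain $K = [1,A]^n \times \prod_{j=1}^{m} \Delta_j$ consisting of multiplier profiles paired with fractional allocations (one probability simplex per good). Allowing fractional allocation is not a mere modeling convenience here: it is exactly what restores upper hemicontinuity to the otherwise discontinuous ``highest bidder wins'' rule, which is what made the normal-form game badly behaved in the motivating two-bidder example above. Concretely, I would first define the allocation correspondence $W(\alpha) = \{x \in \prod_j \Delta_j : x_{i,j} > 0 \Rightarrow \alpha_i v_{i,j} = \max_k \alpha_k v_{k,j}\}$ and note that, for each $\alpha$, $W(\alpha)$ is a nonempty product of faces of the per-good simplices, hence closed and convex; it is upper hemicontinuous in $\alpha$ because a limiting tie only \emph{enlarges} the set of permissible winners.

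Next I would design a multiplier correspondence encoding ROI-feasibility and maximal pacing. A key observation is that under second-price payments $p_j = \max_{k \neq i} \alpha_k v_{k,j}$ is independent of $\alpha_i$, so the ROI slack $s_i(\alpha_{-i}, x) = \sum_j x_{i,j}(v_{i,j} - p_j)$ depends only on $(\alpha_{-i}, x)$. This decouples the two roles of $\alpha_i$: it picks which goods bidder $i$ is tied for (via the highest-bidder rule inside $W$), while leaving payments conditional on those wins untouched. Using this, the intended pacing response $M_i(\alpha_{-i}, x)$ takes the form $\{A\}$ when $s_i > 0$, the interval $[1,A]$ when $s_i = 0$, and, when $s_i < 0$, the set of $\alpha_i$ that forces $W$ to scale bidder $i$'s allocation back until ROI is met. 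Setting $\Phi(\alpha, x) = \prod_i M_i(\alpha_{-i}, x) \times W(\alpha)$, the fixed points of $\Phi$ on $K$ should recover exactly the five conditions in the equilibrium definition.

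The main technical obstacle is that the naive $M_i$ above jumps at $s_i = 0$ and so is not UHC, violating a hypothesis of Kakutani. I would handle this in one of two standard ways: either (i) realize $M_i$ as the $\arg\max$ of a suitable continuous concave surrogate in $\alpha_i$ (e.g., a penalized version of the ROI-constrained problem solved by the pacing proposition cited above) so that Berge's maximum theorem delivers UHC with nonempty compact convex values, or (ii) smooth the correspondence by a parameter $\varepsilon > 0$ (replacing the bang-bang pacing response with a continuous transition through $s_i = 0$), apply Kakutani to the smoothed system, and extract a subsequential limit as $\varepsilon \to 0$ using compactness of $K$. The limit point $(\alpha^*, x^*)$ inherits consistency of the allocation with the multipliers from UHC of $W$, and satisfies the ROI and maximal-pacing conditions by continuity of $s_i$ in $(\alpha, x)$. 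A final bookkeeping step verifies that every such fixed point meets all five conditions in the definition; full allocation and the second-price formula fall out directly from membership of $x^*$ in $W(\alpha^*)$, and the tie-breaking selected by the correspondence gives ROI-feasibility for every bidder.
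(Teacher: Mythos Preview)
Your direct Kakutani approach on the joint multiplier--allocation space is genuinely different from the paper's argument and, once tightened, arguably cleaner. The paper does not work on $K=[1,A]^n\times\prod_j\Delta_j$ at all; instead it builds a family of $(\epsilon,H)$-\emph{smoothed games} in multiplier space only, in which near-winners share goods proportionally, each bidder receives an artificial good plus a cold-start fund of order $\epsilon^{1/2}$, and the ROI constraint is replaced by a large penalty $H$ on positive debt. It then shows the resulting utility is continuous and strictly quasiconcave in $\alpha_i$, invokes Debreu--Fan--Glicksberg for a pure Nash equilibrium of the smoothed game, and passes to a subsequential limit as $\epsilon\to 0$. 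Your option~(ii) is in this spirit, though the paper's smoothing lives entirely in $[1,A]^n$ and needs the artificial gadgets precisely to control the ROI slack near the lower boundary $\alpha_i\approx 1$.

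Your plan can be sharpened in one place: the ``main technical obstacle'' you flag is not actually an obstacle. Replace the vague $s_i<0$ branch of $M_i$ by simply $\{1\}$. Then $M_i(\alpha_{-i},x)$ equals $\{A\}$, $[1,A]$, or $\{1\}$ according to the sign of $s_i$, and its graph is closed because the full interval at $s_i=0$ absorbs both one-sided limits; hence $M_i$ is already upper hemicontinuous with nonempty compact convex values, and Kakutani applies to $\Phi$ with no smoothing. At a fixed point, $s_i<0$ would force $\alpha_i^\ast=1$, but for any $x\in W(\alpha)$ with $\alpha_i=1$ the condition $x_{i,j}>0$ gives $v_{i,j}=\max_k\alpha_kv_{k,j}\ge p_j$, whence $s_i\ge 0$; this contradiction rules out ROI-infeasible fixed points, and the remaining equilibrium conditions fall out as you describe. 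Note that your stated $s_i<0$ branch (``the $\alpha_i$ that restore ROI for \emph{some} $x'\in W$'') is dangerous: it can contain the current $\alpha_i^\ast$ even when the current $x^\ast$ violates ROI, so fixed points under that definition need not be equilibria.
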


The proof is in Appendix \ref{app:proof_existence}.
In addition, the definition and existence result can be extended to incorporate \textit{reserve prices} and \textit{additive boosts} (see Appendix \ref{app:reserve_and_boosts}), both of which are common practice in the industry.

The relaxed version of the equilibrium, named \textbf{$(\eta, \delta)$-approximate} auto-bidding equilibrium, is defined as follows:
\begin{itemize}
	\item bidders with bids close enough to the highest can win the good: 
	if $x_{i, j} > 0$, $\alpha_i v_{i, j} \geq (1 - \eta) \max_{k} \alpha_k v_{k, j}$ (if there is a reserve price $r_j$, winner's bid should also be no less than $(1 - \eta) r_j$);
	\item winner pays the second price (even if it is the bidder with the second highest bid; if there is a reserve price, the price is the maximum of the reserve price and the second price);
	\item full allocation of goods;
	\item approximately ROI-feasible:
	$\sum_{j} x_{i, j} p_j \leq (1 + \delta) \sum_{j} x_{i, j} v_{i, j}$	for all $i$;
	\item approximately maximal pacing:
	unless $\alpha_i = A$, $\sum_{j} x_{i, j} p_j \geq (1 - \delta) \sum_{j} x_{i, j} v_{i, j}$ for all $i$.
\end{itemize}

\section{Computational Complexities}

In this section, we demonstrate two different kinds of intractability or unpredictability of the market.
Our first result shows that, in general, it is hard for a market to reach a stable state.
But even for cases where an equilibrium is easy to achieve, the difference among equilibria may be large, and the second result tells us that, in general, it is hard to determine how large the difference is.

At a high level, the interdependence that the (bid of) runner-up determines the payment of the winner endows the market with a structure similar to Boolean operators (electronic components and conductive wires). In digital electronics, a functionally complete set of operators can be assembled to compute any Boolean function.
In our model, it is PPAD-hard to find an equilibrium since it can encode any stable state of a circuit that is continuous. When optimizing revenue or welfare, the circuit structure collapses to discrete choices that correspond to equilibrium selection, and the problem becomes NP-hard (and APX-hard since the correspondence is almost exact).
Note that hardness does not only exist in the family of instances constructed in the reduction: for a generally hard problem, it is possible but usually non-trivial to identify a meaningful subset of instances that are computationally tractable.

Our focus here is the many complex structures brought into the market by IC.
Nonetheless, to explore equilibrium properties quantitatively, we develop two algorithms (see Appendix \ref{app:algorithms} for details) to compute equilibrium.
Their own properties are beyond the scope of this paper.

\subsection{Complexity of Finding Any Equilibrium}
\label{subsec:ppad}

\begin{theorem} \label{thm:ppad_hardness}
	It is PPAD-hard to find an $(\eta, \delta)$-approximate auto-bidding equilibrium for some constant $\eta, \delta > 0$.
\end{theorem}

The full proof is in Appendix \ref{app:proof_ppad_hardness}.
We will use equilibria in a properly constructed market to encode feasible states of a circuit, which is one of the most fundamental and frequently used objects in complexity theory \cite{chen2009settling,rubinstein2018inapproximability}.
Papadimitriou and Peng \shortcite{papadimitriou2021public} show that a circuit consisting only of a continuous version of NAND (NOT-AND) gate is enough to capture PPAD-hardness,  in analogy to the \textit{functional completeness} of NAND gate in digital electronics.
The continuous gate computes the function that sums all (at most 3) inputs and inverts the result:
for a gate $u$, given all the input values $y_w$ of its incoming gates $w \in N_u$, its own value $y_u$ should satisfy that
\begin{displaymath}
	y_u \in \left\{
	\begin{array}{ll}
		\{0\}, & \text{if } \sum_{w \in N_u} y_w > 0.5; \\
		\{1\}, & \text{if } \sum_{w \in N_u} y_w < 0.5; \\
		{[0, 1]}, & \text{otherwise}.
	\end{array}
	\right.
\end{displaymath}
An assignment $y$ of values to gates is feasible if the above constraints are satisfied for all gates.
The key construction of our reduction is given in Table \ref{tab:ppad_main_construction_exact}. (This is a simplified version: in the full proof we will remove the reliance on reserve prices and take approximation into account.)
\begin{table}[t]
	\centering
	\begin{tabular}{c ccc c c}
		\toprule
		goods & $w_1$ & $w_2$  & $w_3$  & $\underline{u}$ & $\bar{u}$
		\\
		\midrule
		input bidder $w_1$ & 1/14
		\\
		input bidder $w_2$ & & 1/14
		\\
		input bidder $w_3$ & & & 1/14
		\\
		gate bidder $u$ & 1/3 & 1/3 & 1/3 & 1/2 & 1/4
		\\
		reserve price & $1/7$  & $1/7$ & $1/7$ & 1 &  1
		\\
		\bottomrule
	\end{tabular}
	\caption{Valuation profile to encode gate $u$.
	}
	\label{tab:ppad_main_construction_exact}
\end{table}
We will associate each gate $u$ with a bidder of the same name and use its multiplier $\alpha_u$ to encode the gate's value $y_u$ (with $y_u = \alpha_u / 2 - 1$).
The valuation profiles are calibrated such that bidder $u$ would always win all input goods $w_1, w_2, w_3$ but at varying prices, determined by (multipliers of) their corresponding  input bidders.
The correspondence between bidder $u$ and gate $u$ is almost straightforward:
\begin{itemize}
	\item If the sum of input prices is too high, to satisfy its ROI-constraint, bidder $u$ could not even win good $\underline{u}$ in whole and thus won't raise $\alpha_u$ above $2$.
	\item If the sum is too low, good $\bar{u}$ is required to satisfy bidder $u$'s appetite for more valuation, which forces $\alpha_u$ to be fixed at 4.
	\item If input prices sum to exactly 0.5, bidder $u$ can freely choose $\alpha_u$ as long as it is allocated $\underline{u}$ in whole but not $\bar{u}$ at all.
\end{itemize}

Intuitively, the hardness of finding a feasible state of a circuit lies in the coordination among the interconnected gates.
Given any value assignment, for each unsatisfied gate $u$, the corresponding $\alpha_u$ in the constructed market is either too large (input goods are too expensive and ROI-constraint is violated) or too small (input goods are too cheap and bidder $u$ has the incentive to win $\bar{u}$).
Consider hypothetically that we apply a naive search algorithm\footnote{Theoretically, any problem in PPAD can be reduced to the generic End-Of-The-Line problem (by which the class is defined), where we are given (1) a directed graph consisting solely of non-intersecting directed paths (lines) and (2) a vertex with no predecessor (the start of a line). The task is to find a vertex with no successor (the end of a line). There is a natural algorithm (inevitably inefficient if PPAD $\neq$ P) that simply searches along any path. The search procedure we depict here shares a similar spirit, but it may (or may not) circulate and is only used to give some intuition.} where, for each non-equilibrium assignment $y$, we choose some unsatisfied $u$, lower $\alpha_u$ by a small amount if it is too large, and raise it if it is too small.
As we adjust $\alpha_u$ (or $y_u$) for a bidder (gate), the payments (sums of inputs) of its outgoing neighbors change accordingly, which may also change their directions of adjustment (e.g., a bidder goes from ROI-feasible to infeasible).
As gates can be assembled arbitrarily, we can imagine how hard it is to find a state that satisfies the constraints of all bidders/gates.

In retrospect, IC requires the payment of the winner to be determined \textit{externally} by other bidders.
As a result, nothing is local in the market and we can connect bidders in a way that encodes any circuit perfectly.
You may think that the market constructed in the reduction can be simplified, e.g., by setting a reserve price for the input good such that its price would no longer be affected by the input bidder and an equilibrium could be easier to find.
However, this requires much knowledge of the specific market \textit{a priori} such as an upper bound of a bidder's multiplier, which is typically impossible.
In general markets, the aforementioned chasing behavior is even more complex: e.g., if a gate bidder lowers its multiplier, the consequence is simply lowering payments for its outgoing neighbors and increase their ROIs, but in general markets it may also lose goods it previously won and instead decrease ROI for the opponent who now wins the item with a negative marginal ROI.
See the non-monotonicity instance in Section \ref{sec:non_monotonicity} for an example of this complex cascading phenomenon.


\subsection{Complexity of Finding Revenue or Welfare Optimal Equilibrium}

\begin{theorem} \label{thm:complexity}
	It is APX-hard to find the optimal revenue or welfare over all auto-bidding equilibria.
\end{theorem}

The full proof is in Appendix \ref{app:proof_apx_hardness}.
When dealing with optimization problems, besides encoding a \textit{feasible} circuit state into equilibrium, we also need to relate the \textit{objective} (revenue/welfare) of the market owner to one that is hard to optimize within the circuit.
Interestingly, it is still the encoding of feasibility that reveals more distinctive structures of the problem.\footnote{Here feasibility should be encoded in a way such that all the equilibria are known and easily enumerable. The hardness comes from finding the best (not any) one.}
The latter step only involves simple operations (max and sum) that are naturally embedded in the auction rules (winner pays the \textit{largest} non-winning bid) and bidder's rationale (\textit{aggregating} outcomes across all auctions).

The problem to be reduced is of discrete nature, and we will encode a 0-1 \textit{choice} using equilibrium \textit{multiplicity} as shown in Table \ref{tab:choice}.
\begin{table}[t]
	\centering
	\begin{tabular}{c ccp{3cm}}
		\toprule
		valuation &  good $1$ & good $2$  & goods that both bidders will never win
		\\
		\midrule
		bidder $1$ & $2$ & $1$ & \multicolumn{1}{c}{$\epsilon$}
		\\
		bidder $2$ & $1$ & $2$ &  \multicolumn{1}{c}{$\epsilon$}
		\\
		\bottomrule
	\end{tabular}
	\caption{A symmetric market with extremely asymmetric equilibria.}
	\label{tab:choice}
\end{table}
Here the market is symmetric in valuation but has equilibria that represents two extremes in allocation: bidder $i$ (1 or 2) wins both good 1 and 2 with $\alpha_i \geq 2$ and $\alpha_{-i} = 1$.
There is another equilibrium where $\alpha_1 = \alpha_2 = 2$ and each bidder gets the good it values the most, which is natural, fair and revenue-optimal within the sub-market consisting of good 1 and 2.
But we will see in the full proof that it is often advantageous for the seller to choose the asymmetric equilibrium as it frequently dominates the symmetric one in revenue \textit{of the whole market}.
The reason lies in the last column of Table \ref{tab:choice}: though bidder 1 and 2 will never win those goods, they are price-setters and it is usually profitable to enforce a high multiplier on one, rather than letting them share the sub-market fairly but both bid at a moderate level.

The reduction clearly demonstrates the ``externality'' created by IC: each pair of bidders in the sub-market characterized by Table \ref{tab:choice} determines the clearing prices of some other auctions they will never win (though such knowledge is hard to acquire a priori in practice).
On the other hand, if the seller is able to engage in the choice of equilibrium, it may favor those unfair outcomes that, though less profitable locally, could drive up revenue from goods outside these sub-markets.
See Appendix \ref{subsec:seller_competition} and \ref{app:ab_testing} for further discussion on externality among sub-markets and Appendix \ref{app:arbitrary_bid} on how sellers could prevent those unwanted equilibria by actively elevating the bid landscape without breaking single-round IC.

\section{Non-monotonicity}
\label{sec:non_monotonicity}

With auto-bidding built into the mechanism, advertisers are effectively playing a meta-game through reporting tROIs.
In single-item first or second price auctions, raising one’s bid will never decrease its winning probability. In real-world markets, advertisers also expect ROI monotonicity, i.e., lowering tROI/tROAS (raising tCPA) should bring them more valuation.
Due to equilibrium multiplicity, however, it is not clear how to define utility functions  for  the advertiser game, let alone monotonicity.
We avoid this technicality by examining a proper equilibrium transition process, from which we can see how the runner-up-winner interdependence triggers the chain reaction that is  complex and counter-intuitive.

The deviation of the manipulator and the transition of equilibrium work as follows.
At round 0, the manipulator $i_0$ changes its tCPA to a fraction $r < 1$ of the original, resulting in a valuation profile $v'$ such that $v'_{i_0, j} = r v_{i_0, j}, \forall j$.
Each bidder then applies an iterative method (see Appendix  \ref{app:iterative}) to optimize their utilities and collectively find the new equilibrium.
The behavior of the algorithm is very intuitive: if the current ROI (aggregated over a moving window of recent rounds) is too high, lower the multiplier, and vice versa.
To make the transition smooth, $\alpha_{i_0}$ is divided by $r$ right after the tCPA modification such that the fine-grained bids of the manipulator are kept unchanged at the moment.

The detail of the non-monotone example is in Appendix \ref{app:non_monotonicity}.
Here we give a high-level description of the process.
At the old equilibrium, good 1 is the only good won by $i_0$, and it is shared between $i_0$ and another bidder $i_1$.
$i_0$ initiates the dynamics by lowering $\alpha_{i_0}$, since its ROI-constraint is now violated after the update of tCPA.
However, $i_1$ does not want to win good $1$ completely, otherwise its ROI-constraint will also be violated. So $i_1$ lowers its multiplier as well, which further triggers the same behavior for bidder  $i_2$. As a result, $i_0$, $i_1$ and $i_2$ reach an almost perfect coordination where the multiplicative ratios among their multipliers remain nearly constant all the way through the transition.
There is another bidder, $i_3$, who pays less due to the lowered second prices set by $i_0$, $i_1$ and $i_2$.
Therefore it tries to win more goods by gradually raising its multiplier.
During the process, $i_2$ and $i_1$ pay more for goods whose second prices are set by $i_3$, and thus they have to give up goods of which they are one of the tied winners (these goods have the lowest marginal ROIs): $i_2$ gives up good $2$ to $i_1$, and $i_1$ wins more good $2$ but loses good $1$ to $i_0$ to balance its deficit, which contributes to the success of the manipulation of $i_0$. In the end, $i_3$ takes a fraction of good $2$ away from $i_1$ to bind its ROI-constraint, and $i_1$ compensates this by taking a fraction of good $1$ from $i_0$. Nonetheless, $i_0$ still benefits from lowering its tCPA.

\section{Practical Properties of the Market}

\subsection{Utility Instability for Advertisers}
Besides high-quality value estimation and bid optimization, platforms are also trying to serve many other needs of their clients, among which  utility (i.e., the total acquired value) stability stands out because (1) advertisers expect a smooth experience, and more importantly (2) utility is the most prominent feedback on how successful their advertising campaigns are.
As a result, utility instability may bring confusions and put many good campaigns at the risk of being forfeited prematurely.
Our experiments (see Appendix \ref{app:multiplicity}) show that, in markets generated from several different stochastic processes, a large utility gap between the worst and the best equilibrium for an advertiser is quite often to be observed, and it is fairly common that an advertiser wins nothing in some equilibrium but acquires a significant positive value in others.
The gap seems to reduce for thicker markets, but large-scale realistic instances suffer another type of instability: sensitivity to input valuations (see Appendix \ref{app:sensitivity}).

Instability differs in degree market-by-market and we will give more analysis in Appendix \ref{app:discussion_instability}.
To get a basic idea, consider a two-bidder market that is symmetric in the sense that goods appear in pair, of which one is valued $v_1$ and $v_2$ and the other is valued $v_2$ and $v_1$ by bidder 1 and 2, respectively (note that the key construction in the proof of APX-hardness shares a similar structure).
There is always an equilibrium where bidder 1 wins all goods, and one where bidder 2 wins all.
Depending on specific valuation profiles, there may also be many intermediate ones.
From a dynamic point of view, committing a higher multiplier would make the opponent pay more, and the bidder who quits the price war first would lower its multiplier to satisfy its ROI-constraint (and also the opponent's) but lose the market share.

In addition, the above prototypical example distinguishes two sources of instability: an intensely competitive landscape and the IC property.
First price auction also suffers high-sensitivity if it holds for  a large percentage of goods that the values of top bidders are extremely close.
However, in first price auction, competition is local and direct, i.e., bid or value perturbations only affect the auctions in which they happen.
But in second price auction, any fluctuation will propagate to the whole market through the runner-up-winner interdependence and the impact is more widespread and unpredictable.

\subsection{Interference in A/B Testing for Platforms}
A/B testing is an indispensable tool to evaluate new technologies and assist business decisions.
In a typical setup, users in experiment are randomly assigned to either a treatment or a control variant (e.g., different reserve pricing strategies), and metrics are aggregated within each group to compare and see which variant is better.
The same idea can also be applied to randomize advertisers.
An ideal experiment requires the Stable Unit Treatment Value Assumption (SUTVA) to hold, which generally means that there should be no interference between the treatment and the control group.
Ad-side experiments (regardless of auction formats) clearly violate SUTVA since all ads compete for the same set of goods, and user-side violation (in non-auction scenarios) is also common in practice.
We show empirically that an unpredictable\footnote{It is known that A/B testing in two-sided markets suffers from \textit{cannibalization bias} \cite{blake2014marketplace,liu2020trustworthy}, which often enlarges the estimated advantages of the better variant.
	Such a bias may actually increase the experimental power since practitioners care more about whether a treatment is better, rather than how much better.
	In contrast, the interference introduced by IC is complex and it may lead to wrong decisions easily.}
bias exists broadly in naive implementations of both user-side and ad-side A/B testing in second price auction markets.
The bias comes from the fact that bidders' behaviors in  the (counterfactual) A/A, A/B and B/B tests are all different.
Experiment details and more discussion can be found in Appendix \ref{app:ab_testing}, where we also propose a  simple approach to discerning biases and designing less-biased experiments.

\section{First Price Auction versus Second Price Auction: Within and Beyond Auto-bidding}
\label{sec:first_vs_second}

With first price auction, no auto-bidding is needed and the platform simply allocates goods to bidders with the highest (tROI-discounted) valuations, which are also charged as payments.
The best possible revenue (see Appendix \ref{app:arbitrary_bid}) is naturally achieved.
Quasi-linear utility makes no difference for advertisers and it is a dominant strategy to report their tROIs truthfully. Even if they prefer spending less with the same acquired value, the incentive to deviate diminishes as the market becomes thicker.\footnote{This technically deviates from our model. Rigorous treatment is given in Appendix \ref{app:frugal}.
}
First price auction does not provide advertisers an \textit{ex-post optimal} outcome.
However, advertisers should happily accept it since it is still \textit{fair/envy-free} as it relates closely to the classic \textit{competitive equilibrium} in Fisher markets (see Appendix \ref{app:fairness}).

Besides revenue-optimality, strategyproofness and fairness, first price auction also dominates in almost all the other aspects studied in this paper: (1) market outcome is unique; (2) computation is straightforward; (3) ROI-monotone; (4) no interference among sub-markets; (5) competition is direct and local; even if the competition is so intense that the utility becomes unstable, it can easily be smoothed by actively applying small perturbations or probabilistic allocations \cite{borgs2007dynamics}.
It shares with second price auction the problem of biases in ad-side A/B testing since it is rooted in the setup itself, irrelevant to auction formats.
Budget-constrained markets share similar results \cite{conitzer2021pacing}: (1) the equilibrium is unique; (2) computation is convex and tractable; (3) budget-monotone.
Since budget should still be paced, there remains interference among sub-markets and the competition is less direct than within ROI-constrained markets.
Nonetheless, with first price auction, the market is convex and more predictable, in contrast to the complex combinatorial structure of second price auction that is difficult to deal with.

In Appendix \ref{app:google_shift}, we extrapolate from our model to the case where advertisers may submit tROIs for several different sub-markets (e.g., via targeting in practice).
We argue that, if optional sub-markets are coarse-grained and advertisers do not have enough knowledge to differentiate them, the market could still be well captured by our model and first price auction mostly retains the upper hand.
As an application of our results, we also give our guess on why Google moves from second to first price auction for only Content, Video and Games, but not Search and Shopping. 

\section{Conclusion}
\label{sec:concluding_remarks}


In this paper, we study a model that abstracts several most influential features of present auto-bidding markets.
We try our best to pinpoint how the IC property contributes to every theoretical or empirical phenomenon such that readers could better extrapolate our results.
We hope that our work could bring new perspectives to the community, and inspire practitioners to pay closer attention to the IC property and attain a better grasp of real-world markets.


%


\bibliographystyle{named}
\bibliography{bibliography}

\clearpage
\newpage

\appendix

\section{Note on Numerical Experiments}

Numerical experiments are extensively used throughout the paper.
Appendix \ref{app:algorithms} will explain how the market instances are generated and how equilibria are computed.

\section{Notes on Terminology}
\label{app:terminologies}

\paragraph{ROAS, ROI and CPA.}
ROAS (Return-On-Ad-Spend) is generally unambiguous, measured as the ratio of the received value to the payment. Typically ROI means the ratio of the quasi-linear utility to the payment, but is sometimes used the same as ROAS. We will stick to the first usage in this paper.
Target CPA (Cost-Per-Action) can be interpreted as the amount of money the advertiser is willing to pay for every specified action taken by a user after interacting with the ad.
The three quantities are different forms of the same mathematical concept, but the ``directions'' of tCPA and tROI/tROAS are opposite: raising tCPA is equivalent to lowering tROI/tROAS by some appropriate factors.
In practice, tCPA is mainly used for objectives that are hard to directly value and discrete in nature, such as ad-clicks, app-downloads, user subscriptions, new customer visits, etc., while tROI/tROAS are used for those easily related to money and thus continuous, e.g., in-app purchases for mobile games and sales volumes for stores.

\paragraph{Conversion-rate.}
For discrete conversions, conversion-rate can be directly interpreted as the probability that a conversion happens after the user interacting with the ad. For conversions of continuous types (such as sales volume), the conversion-rate is the expected quantity of conversions received by the advertiser (e.g., the amount of money spent). Regardless, the value can always be decomposed into two parts, one of which is the relative magnitudes (predicted by the platform) among ad slots, and the other is a uniform factor (derived from the tCPA/tROI/tROAS submitted by the advertiser) that scales the relative magnitudes such that the resulting valuation would be comparable among advertisers.

\paragraph{Pacing.}
Pacing is initially used for budget-constrained campaigns where the job of an auto-bidder is to \textit{pace} the rate at which the budget is spent. 
There is no budget for ROI-constrained bidders, but the strategy is so well-known and thus we stick to the name in this paper.

\section{More Discussion on Related Works}
\label{app:more_related_works}

\paragraph{Traditional RTB and auto-bidding literature.}
Traditionally RTB or auto-bidding algorithms
are mainly designed from a single advertiser's point of view.
One exception is the work by Aggarwal et al. \shortcite{aggarwal2019autobidding}, but their treatment of equilibrium existence is incomplete.
First, they overlooked the full allocation condition (see Definition \ref{def:autobidding_equilibrium}): what they prove exists is actually an “equilibrium up to tied goods” (see Appendix \ref{app:iterative}).
Second and more importantly, they assume a value distribution without point mass, which makes the problem continuous.
Note that, except for the discussion on equilibrium, the main body of their work  deals with the optimization problem faced by a single auto-bidder with \textit{discrete} valuations.
Discontinuity makes the fixed-point theorem inapplicable and the existence proof much harder.
They may already be aware of such difficulties and circumvent it by assuming continuity.
Also note that continuity (without further assumptions like independence) is only useful when establishing existence. Other than that, it is more convenient to rigorously discuss and study equilibrium properties with a general discrete valuation.

\paragraph{Pure Nash equilibrium and auto-bidding equilibrium.}
In our model, a pure Nash equilibrium may not always exist.
This is in contrast to the budget-constrained case \cite{conitzer2021multiplicative} where the equilibrium is a \textit{refinement} of PNEs (with a flexible tie-breaking rule that favors an equilibrium if one exits). This also justifies that the ROI-constraint deserves a separate treatment from the very beginning.

\paragraph{Externality.}
A simple example of externality is spitefulness (see, e.g., \cite{kimbrough2012measuring}), of which a bidder always prefers the winner spending more when it loses the auction. Other examples include inequality aversion \cite{gurtler2012inequality}, intention-based social preferences \cite{rabin1993incorporating}, cross-shareholdings between firms \cite{dasgupta2004auctions}, etc.
Leme et al. \shortcite{leme2012sequential} study a kind of internalized externality similar to ours in a sequential setting.

\paragraph{Rationale of the second-to-first trend in the industry.}
Researchers have tried to explain the recent trend where more and more platforms move from second to first price auction from the perspective of revenue.
They show that, under different mathematical (but usually too idealized and restricted) models, choosing first price auction brings more revenue for the platform against their opponents \cite{paes2020competitive,despotakis2021first}.
We take a more holistic approach and argue that it is advertisers' ability to differentiate sub-markets that determines the auction format: the more coarse-grained the optional sub-markets are, the closer the market is to our model and the more advantages first price has over second price auction.

\paragraph{Beyond multiplicative pacing.}
In this paper we focus on auto-bidders with the multiplicative pacing strategies.
Some of the recent auto-bidding literature \cite{liaw2022efficiency,deng2022efficiency} on first price auction still allow bidders to bid arbitrarily for individual auctions, which conforms more to the traditional auction design and RTB settings.
For budget-constrained auto-bidders, there is a strategy named \textit{throttling} that receives some attention in the literature \cite{chen2021throttling,balseiro2021budget}.
The behavior of throttling is quite different from multiplicative pacing: e.g., with throttling the utility of each bidder is naturally continuous w.r.t. the throttling parameter, while with multiplicative pacing the utility is discontinuous w.r.t. the multiplier.
There is no counterpart of throttling in ROI-constrained auto-bidding.
To the authors' knowledge, it is also much less used than multiplicative pacing in the industry  nowadays.

\section{Rationale behind the Definition of Auto-bidding Equilibrium}
\label{app:equilibrium_definition_rationale}

Recall that we consider a market where a set of bidders $N = \{1, \dots, n\}$ compete for a set of divisible goods $M = \{1, \dots, m\}$.
The tROIs of all bidders are set to zero, i.e., each bidder's spend should be no more than its acquired value.
ROI is calculated as the ratio of the quasi-linear utility to the payment, and ROAS = ROI + 1. Note that a bidder with valuations $\lambda v_{i, j}$'s and a tROI of $\lambda - 1$ will behave exactly the same as one with $v_{i, j}$'s and tROI zero.
Therefore the assumption of zero tROIs is WLOG.
Meanwhile, we implicitly use the tROI-discounted value to measure welfare. By assuming zero tROIs, no discount appears explicitly in calculation, but the zero-tROI assumption is WLOG only if the discount is always taken into account. If we change the tROI/tCPA of a bidder, we will never consider the welfare of the whole market.

We use $v_{i, j}$ to denote the value of bidder $i$ to good $j$.
For each good $j$, there is at least one bidder $i$ such that $v_{i, j} > 0$.
The ad platform simultaneously runs a single-item second price auction for every good.
Auto-bidders are restricted to apply multiplicative pacing strategies: the action space of bidder $i$ is the set of undominated multipliers $\alpha_i \in [1, +\infty)$,\footnote{In the definition of auto-bidding equilibrium and the proof of equilibrium existence (Appendix \ref{app:proof_existence}), there is an upper bound $A$ for multipliers. Theoretically, an upper bound makes the strategy space compact. It is easy to see that, for sufficiently large $A$, any equilibrium for  $\alpha \in [1, A]^n$ will be equivalent to an equilibrium without the cap.} and its bid for good $j$ is $\alpha_i v_{i, j}$.

If we specify a tie-breaking rule to make the allocation $x \in [0, 1]^{n \times m}$ (where $x_{i, j}$ is the fraction of good $j$ allocated to bidder $i$) uniquely determined by $\alpha$, and define bidder's utility as the ROI-constrained valuation:
\begin{displaymath}
	u_i(\alpha) =
	\left\{
	\begin{array}{ll}
		\sum_j  x_{i, j} v_{i, j}, & \text{if } \sum_{j} x_{i, j} p_j \leq \sum_{j} x_{i, j} v_{i, j};
		\\
		-\infty, & \text{otherwise;}
	\end{array}
	\right.
\end{displaymath}
then the pure Nash equilibrium (PNE) of this normal form game seems to be a reasonable solution concept.
However, no matter how ties are broken, there exists instances where a PNE does not exist.
On the other hand, it turns out that there always exists some pure strategy  profile that constitutes a steady-state of the market.
Take the market given in Table \ref{tab:no_pne} with 2 bidders and 2 goods as an example.
\begin{table}[h]
	\centering
	\begin{tabular}{c cc}
		\toprule
		valuation &  good $1$ & good $2$   
		\\
		\midrule
		bidder $1$ & $1$ & $1$ 
		\\
		bidder $2$ & $0$ & $3$
		\\
		\bottomrule
	\end{tabular}
	\caption{A market without PNE.}
	\label{tab:no_pne}
\end{table}
Suppose for now that we are in a dynamic setting where 2 simultaneous auctions are repeated each round.
Bidders are restricted to multiplicative pacing within each round, but allowed to adjust multipliers across time.
Since bidder $1$ can always win good $1$ for free, it has an incentive to oscillate its bid $b_{1, 2}$ around $3$ to win good $2$ \textit{sometimes} with a price around $3$.
In response, bidder $2$ will keep its bid $b_{2, 2}$ at $3$, as bidding more risks violating its ROI-constraint.
The resulting long-term average allocation should be $x_{1, 1} = 1, x_{2, 1} = 0, x_{1, 2} = x_{2, 2} = 0.5$ with prices $p_1 = 0, p_2 = 3$.
However, no PNE can achieve this outcome regardless of tie-breaking rules, since with $b_{1, 2} = 3$, bidder $2$ always wants to raise its bid above $3$ to win good 2 in whole.
On the other hand, $(\alpha_1, \alpha_2) = (3, 1)$ does constitute a stable state, and there is no need to introduce mixed strategies.

To circumvent the non-existence of PNE, we directly define our solution concept, the auto-bidding equilibrium (Definition \ref{def:autobidding_equilibrium}), which reasonably characterizes the steady-state of the market and always exists.
For the market in Table \ref{tab:no_pne}, the unique auto-bidding equilibrium is $\alpha_1 = 3, \alpha_2 = 1$ and $x_{1, 1} = 1, x_{2, 1} = 0, x_{1, 2} = x_{2, 2} = 0.5$, exactly as anticipated.

\section{Reserve Prices and Additive Boosts}
\label{app:reserve_and_boosts}

We will use reserve prices and additive boosts in Section \ref{subsec:ppad}, and Appendix \ref{subsec:seller_competition}, \ref{app:arbitrary_bid} and \ref{app:ab_testing}.
With reserve prices, the full allocation condition only needs to hold if the highest bid is strictly larger than the reserve price. With additive boosts, bidders are ranked by the boosted bid $b_{i, j} + c_{i, j}$ ($= \alpha_i v_{i, j} + c_{i, j}$ with multiplicative pacing) where $c_{i, j}$ are constants chosen by the seller, independent of $b$. The winner is charged with the second highest boosted bid minus its own boost (or equivalently, the minimum non-boosted bid $b_{i, j}$ for bidder $i$ to win).

\section{Proof of Theorem \ref{thm:existence} (Existence of Auto-bidding Equilibrium)}
\label{app:proof_existence}

The high-level picture of the proof follows the methodology commonly employed in existence results built on the theorem by Debreu \shortcite{debreu1952social}, Fan \shortcite{fan1952fixed} amd Glicksberg \shortcite{glicksberg1952further}, i.e., the original discontinuous game is approximated by a series of smoothed instances whose PNEs are guaranteed to exist.
Here we have two sources of discontinuities: one lies in the payment and allocation resulting from the auction rule and discrete valuations; another lies in the utility due to the hard ROI-constraints.

A standard approach to smooth the former is to divide goods among the set of bidders whose bids are close enough to the highest bid, such that the share of allocation and payment is continuous with respect to the multiplier.
For ROI-constraints, note that bidder $i$ has an incentive to raise bid and win more low-ROI goods if its quasi-linear utility is positive (and otherwise it would lower bid and give up some low-ROI goods).
In the proof we will call a negative quasi-linear utility \textit{debt}.
To tackle utility discontinuity, besides the acquired value, we will add a negative term to the utility for each unit of debt a bidder owes.
When the bidder has a negative debt, the coefficient of this term is set small enough to maintain its original incentive for acquiring more value.
Otherwise, the coefficient will be made sufficiently large to impose the hard ROI-constraint in a continuous way.

ROI-constraints bring two extra difficulties.
First, unlike payment, sometimes the debt will decrease with respect to the multiplier, since a bidder may win a fraction of some good with a positive marginal-ROI due to the smoothed allocation.
This is overcome by lower bounding the strategy space slightly above 1, such that all goods with positive marginal-ROI would be allocated fully even at the minimum multiplier, and winning any extra goods would thus bring a positive debt.
However, the lower bound comes with the second difficulty: a bidder may violate the ROI-constraint at the minimum multiplier, which puts the limiting point at the same risk.
The solution is to add an infinitesimal cold-start fund to guarantee that the total debt is negative when the multiplier lies on the lower bound.

The proof proceeds as follows.
\begin{definition}
	For $\epsilon > 0$ and $H > 0$, an \textbf{$(\epsilon, H)$-smoothed game} is a normal form game over an auto-bidding market where the set of pure strategies for each bidder $i$ is the set of multipliers $\alpha_i \in \bigbrackets{ 1 + \frac{2\epsilon}{v_i^*} , A}$,  where $v_i^* = \min_{j : v_{i, j} > 0} v_{i, j}$.
	Bidder $i$'s bid for good $j$ is still $b_{i, j} = \alpha_i v_{i, j}$, but the auction rule and bidders' utility functions are modified as follows:
	
	\textbf{Allocation and payment rule:}
	for each good $j$, consider the highest bid $b_j^* = \max_k \alpha_k v_{k, j}$. Let $S_j = \{i: \alpha_i v_{i, j} \geq b_j^* - \epsilon \}$ be the set of bidders close to the first price winner for $j$. Then for $i \in S_j$, the allocation $x$ is given by
	\begin{displaymath}
		x_{i, j} = \frac{\alpha_i v_{i, j} - (b_j^* - \epsilon)}{\sum_{k \in S_j} [\alpha_k v_{k, j} - (b_j^* - \epsilon)] },
	\end{displaymath}
	and $p_{i, j}$ (bidder $i$'s payment) is the highest bid on good $j$ excluding  bidder $i$. For other bidders, $x_{i, j} = 0$.
	
	\textbf{Additional artificial good:}
	each bidder will additionally receive a quantity $\alpha_i$ of an artificial good (with unlimited supply) worth $2\epsilon$ per unit, and afford a debt of $\epsilon$ per unit. This results in a profit of $\alpha_i \epsilon$ if the bidder is out of debt, and a large cost otherwise.
	
	\textbf{Cold-start fund:}
	each bidder starts with a fund of $\epsilon^{1/2}$.
	We will call $- \sum_j v_{i, j} x_{i, j} + \sum_j p_{i, j} x_{i, j}$ the \textit{real} debt, and $\alpha_i \epsilon - \epsilon^{1/2}$ the \textit{artificial} debt.
	For sufficiently small $\epsilon$, the artificial debt is negative.
	
	\textbf{Utility:} $u_i(\alpha) = ( \sum_j v_{i, j} x_{i, j} - \sum_j p_{i, j} x_{i, j} - \alpha_i \epsilon + \epsilon^{1/2}   ) + 2\alpha_i \epsilon + A  \sum_{j} v_{i, j} x_{i, j}$
	if $\sum_j v_{i, j} x_{i, j} - \sum_j p_{i, j} x_{i, j} - \alpha_i \epsilon + \epsilon^{1/2}  \geq 0$, otherwise
	$u_i(\alpha) = H \bigparen{\sum_j v_{i, j} x_{i, j} - \sum_j p_{i, j} x_{i, j} - \alpha_i \epsilon + \epsilon^{1/2}  } + 2\alpha_i \epsilon + A \sum_{j} v_{i, j} x_{i, j}$.
	
\end{definition}

\begin{lemma}
	For any $(\epsilon, H)$-smoothed game with $H > \max\left\{ 2, \frac{A \max v_{i, j}}{  \epsilon} \right\}$,  a PNE always exists.
\end{lemma}
\begin{proof}
	The classic theorem \cite{debreu1952social,fan1952fixed,glicksberg1952further} assures us that a PNE always exists if the game satisfies the following three conditions, which we will verify as follows:
	
	\textbf{Compact and convex strategy space.} $\alpha_i \in \bigbrackets{ 1 + \frac{2\epsilon}{v_i^*} , A}$.
	
	\textbf{Continuity of utility w.r.t. the strategy profile.} $b_j^*$ is continuous in $\alpha$. $x_{i, j}$ and $p_{i, j}$ are continuous in $\alpha$ (and $b_j^*$) (in particular, bidder $i$ who is just barely in $S_j$ with $\alpha_i v_{i, j} = b_j^* - \epsilon$ receives zero allocation).
	And the utility is continuous in $\alpha, x$ and $p$ (in particular, when $\sum_j v_{i, j} x_{i, j} - \sum_j p_{i, j} x_{i, j} - \alpha_i \epsilon + \epsilon^{1/2}  = 0$, the expressions coincide at $2 \alpha_i \epsilon + A \sum_{j} v_{i, j} x_{i, j}$).
	
	\textbf{Quasiconcavity of utility in the bidder's own strategy.} Fix $\alpha_{-i}$, let $t$ be the infimum of the set of $\alpha_i$ such that $\sum_j v_{i, j} x_{i, j} - \sum_j p_{i, j} x_{i, j} - \alpha_i \epsilon + \epsilon^{1/2}  \leq 0$ (if no such value exists, set $t = A$).
	
	For $\alpha_i < t$, rearrange $u_i$ as follows:
	\begin{displaymath}
		u_i(\alpha) =
		\sum_{j} v_{i, j} x_{i, j}
		+
		\alpha_i \epsilon
		+
		\sum_j \bigparen{A v_{i, j} - p_{i, j}} x_{i, j}
		+
		\epsilon^{1/2}.
	\end{displaymath}
	$u_i$ is strictly increasing in $\alpha_i$, since $p_{i, j}$ is fixed, $x_{i, j}$ is increasing in $\alpha_i$, and $A v_{i, j} \geq \alpha_i v_{i, j} \geq p_{i, j}$ if $x_{i, j} > 0$.
	
	On the other hand,
	if $p_{i, j} < v_{i, j} + \epsilon \leq \bigparen{1 + \frac{2\epsilon}{v_i^*}} v_{i, j} - \epsilon$, then $x_{i, j} = 1$. Therefore, $i$'s real debt $ \sum_j (p_{i, j} - v_{i, j}) x_{i, j}$ will always increase in $x_{i, j}$, which makes the total debt $\sum_j v_{i, j} x_{i, j} - \sum_j p_{i, j} x_{i, j} - \alpha_i \epsilon + \epsilon^{1/2}  \leq 0$ for all $\alpha_i \geq t$. 
	So for $\alpha_i \geq t$, we can rearrange $u_i$ as
	\begin{align*}
		u_i(\alpha)
		=&
		- H \sum_j (p_{i, j} - v_{i, j}) x_{i, j}
		- (H - 2) \epsilon \alpha_i
		\\
		&+ \sum_j A v_{i, j} x_{i, j}
		+
		H\epsilon^{1/2}.
	\end{align*}
	The second term is strictly decreasing in $\alpha_i$ for $H > 2$.
	Since $p_{i, j} \geq v_{i, j} + \epsilon$ for any newly acquired good $j$, the first term decreases the utility at a rate of at least $H \epsilon$ in terms of $x_{i, j}$.
	And if
	$H > \frac{A v_{i, j}}{\epsilon}$,
	the first and the third term combined will also decrease in $x_{i, j}$, and thus $u_i$ is strictly decreasing in $\alpha_i$ when $\alpha_i \geq t$.
\end{proof}

\begin{proof}[Proof of Theorem \ref{thm:existence}]
	Consider a sequence of smoothed games defined by $(\epsilon^k, H^k)$ satisfying $H^k > \max\left\{ 2, \frac{A \max v_{i, j}}{  \epsilon^k} \right\}$, and $\lim_k \epsilon^k = 0$.
	Since the set of pacing multipliers, allocations and payments is compact, we can pick a converging sequence of equilibria of these games $\{ \alpha_i^k, x_{i, j}^k, p_{i, j}^k \}_{i \in N, j \in M} \rightarrow \{\alpha_i^*, x_{i, j}^*, p_{i, j}^*\}_{i \in N, j \in M}$.
	We should check that $(\alpha^*, x^*)$ forms an auto-bidding equilibrium.
	
	\textbf{Goods go to the highest bidders.}
	If $x_{i, j}^* > 0$, then for sufficiently large $k$, $\alpha_i^k v_{i, j} \geq \max_{i'} \alpha_{i'}^k v_{i', j} - \epsilon^k$. Since $\lim_k \epsilon^k = 0$, we have $\alpha_i^* v_{i, j} \geq \max_{i'} \alpha_{i'}^* v_{i', j}$.

	\textbf{Winner pays the second price.}
	$p_{i, j}^k$ is the highest bid among bidders excluding $i$ itself, so it converges to the highest bid among other bidders at the limit point.
	
	\textbf{Full allocations of goods.}
	For each $k$ and $j$, $\sum_{i} x_{i, j}^k = 1$.
	
	\textbf{ROI-feasible.}
	Suppose that for some bidder $i$, $\sum_j p_j^* x_{i, j}^* > \sum_j v_{i, j} x_{i, j}^*$.
	Then there exists $\delta > 0$ such that for any $K$, we can find $k > K$ with $\sum_j p_{i, j}^k x_{i, j}^k - \sum_j v_{i, j} x_{i, j}^k > \delta$.
	For sufficiently large $k$, the artificial debt $\alpha_i \epsilon - \epsilon^{1/2}$ will be less than $\delta$, which results in a strictly positive total debt.
	However, by bidding $1 + \frac{2\epsilon}{v_i^*}$, the real debt is at most $m \epsilon \bigparen{\frac{2 \max_i v_{i, j}}{v_i^*} + 1}$, which is less than the negative artificial debt $\epsilon \bigparen{\frac{1}{\epsilon^{1/2}} - \alpha_i}$ for sufficiently small $\epsilon$.
	Thus by the strict quasiconcavity of the utility, in this case $i$ could choose an $\alpha_i^k$ such that the total debt would be zero and its utility would be strictly higher,  a contradiction.
	
	\textbf{Maximal pacing.}
	If $\sum_{j} p_j^* x_{i, j}^* < \sum_{j} v_{i, j} x_{i, j}^*$, then there exists some $K$ such that for any $k > K$, $\sum_{j} p_{i, j}^k x_{i, j}^k < \sum_{j} v_{i, j} x_{i, j}^k $.
	By the strict quasiconcavity of the utility, $\alpha_i^k = A$, so $\alpha_i^* = A$.
\end{proof}

\section{Proof of Theorem \ref{thm:ppad_hardness} (PPAD-hardness of Finding Any Equilibrium)}
\label{app:proof_ppad_hardness}


We will prove the result by reducing from the problem of finding an $\epsilon$-approximate equilibrium of a \textit{threshold game}.
A threshold game is defined over a directed graph $G = (V, E)$ with a threshold parameter $t \in (\epsilon, 1 - \epsilon)$.
Each vertex $u$ is a player with action space $y_u \in [0, 1]$.
An action profile forms an $\epsilon$-approximate equilibrium if for every vertex $u$ and the set of its in-neighbors $N_u$, it satisfies that
\begin{displaymath}
	y_u \in \left\{
	\begin{array}{ll}
		{[0, \epsilon]}, & \text{if } \sum_{w \in N_u} y_w > t + \epsilon; \\
		{[1 - \epsilon, 1]}, & \text{if } \sum_{w \in N_u} y_w < t - \epsilon; \\
		{[0, 1]}, & \text{otherwise}.
	\end{array}
	\right.
\end{displaymath}
The problem is known  to be PPAD-complete for some constant $\epsilon > 0$, any value of $t \in (\epsilon, 1 - \epsilon)$, and any graph where the in-degree and the out-degree of each vertex is at most 3 \cite{papadimitriou2021public}.
In the reduction we will choose $t = 1/2$.
We will first reduce an instance of the threshold game to an auto-bidding market with reserve prices (see definitions in Section \ref{sec:auto_bidding_equilibrium}).
Later we will show how to remove reserve prices while maintaining the correctness of the reduction.

\begin{table*}[t]
	\centering
	\begin{tabular}{c ccc c c}
		\toprule
		& $(w_1, u)$ & $(w_2, u)$  & $(w_3, u)$  & $\underline{u}$ & $\bar{u}$
		\\
		\midrule
		in-neighbor $w_1$ & (1 - $\eta$)/14
		\\
		in-neighbor $w_2$ & & (1 - $\eta$)/14
		\\
		in-neighbor $w_3$ & & & (1 - $\eta$)/14
		\\
		bidder $u$ & 1/3 & 1/3 & 1/3 & (1 - $\eta$)/2 & 1/(4 - 4$\eta$)
		\\
		reserve prices & $(1 - \eta)/7$  & $(1 - \eta)/7$ & $(1 - \eta)/7$ & 1 &  1
		\\
		\bottomrule
	\end{tabular}
	\caption{Valuations related to a vertex $u$.}
	\label{tab:ppad_main_construction}
\end{table*}

The construction takes $(\eta, \delta)$ as inputs (for now just treat them as two numbers).
The market consists of $|V|$ bidders and $5|V|$ goods.
For each vertex $u \in V$, there are a vertex bidder, a lower bound good $\underline{u}$, an upper bound good $\bar{u}$, and three incoming edge goods.
We associate each incoming edge $(w, u) \in E$ of vertex $u$ to one of its corresponding incoming edge goods.
For simplicity, we will name a vertex bidder or edge good by its corresponding vertex or edge.
The meaning will be clear from the context and we will not refer to an edge good that is not associated to any edge (this happens when the vertex has less than three in-neighbors).
All lower and upper bound goods have a reserve price $1$.
All edge goods have a reserve price $(1 - \eta)/7$.
Only the corresponding vertex bidder $u$ is interested in bound good $\underline{u}$ and $\bar{u}$ with $v_{u, \underline{u}} = (1 - \eta)/2$ and $v_{u, \bar{u}} = 1/(4 - 4\eta)$.
For each edge good $(w, u)$, only bidder $w$ and $u$ value it positively with $v_{u, (w, u)} = 1 / 3$ and $v_{w, (w, u)} = (1  - \eta) / 14$.
Table \ref{tab:ppad_main_construction} summarizes the construction of valuations.

\begin{lemma}
	If $(\alpha, x)$ is an $(\eta, \delta)$-approximate auto-bidding equilibrium of the market constructed with parameter $(\eta, \delta)$ and $\eta, \delta > 0$ are sufficiently small, then $\alpha_{u} \in [2, 4], \forall u \in V$, and edge goods will be sold fully to their corresponding head bidders.
\end{lemma}
\begin{proof}
	In this proof, all inequalities hold strictly when $\eta = 0$ and $\delta = 0$.
	By continuity, there exist sufficiently small $\eta > 0$ and $\delta > 0$ that maintain the strictness of these inequalities.
	
	If $\alpha_u > 4$, it will win good $\underline{u}$ and $\bar{u}$ in whole and pay 2 for them, but the total value of all the goods in which bidder $u$ is interested is only $\frac{7}{4} - \frac{\eta}{2} + \frac{\eta}{4(1 - \eta)} < \frac{2}{1 + \delta}$, violating the ROI-feasible condition.
	
	Given that all multipliers are upper bounded by 4, bidder $w$ could bid at most $\frac{2(1 - \eta)}{7} <  \frac{1}{3} (1 - \eta)$ to the outgoing edge good $(w, u)$, so good $(w, u)$ will be sold fully to bidder $u$.
	If $\alpha_u < 2$, bidder $u$ wins and only wins the incoming edge goods of total value $1$, but pays at most $\frac{ 3 \cdot 2 (1 - \eta)}{7} < 1 - \delta$, violating the maximal pacing condition.
\end{proof}

\begin{lemma} \label{lemma:ppad_main_construction}
	Given  an $(\eta, \delta)$-approximate auto-bidding equilibrium $(\alpha, x)$ of the market constructed with parameter $(\eta, \delta)$, 	construct an action profile $y$ of the threshold game by setting $y_u = \frac{1}{2} \alpha_u  - 1 \in [0, 1]$ for every $u \in V$.
	Then $y$ is an $\epsilon$-approximate equilibrium of the threshold game if $\eta$ and $\delta$ is sufficiently small and $\epsilon = \frac{7(3 - \eta)\delta }{ 2(1 - \eta)}$.
\end{lemma}
\begin{proof}
	Consider three different cases of the sum of in-neighbors' actions for each vertex $u$.
	\begin{itemize}
		\item $\sum_{w \in N_u} y_w > \frac{1}{2} + \epsilon$.
		Then $\sum_{w \in N_u} \alpha_u = \sum_{w \in N_u} 2(y_w + 1) > 2 |N_u| + 1 + 2 \epsilon$.
		Bidder $u$ gets a value of 1 from incoming edge goods and pays
		\begin{align*}
			&\frac{1 - \eta}{14} \bigparen{ 2 (3 - |N_u|) + \sum_{w \in N_u} \alpha_w}
			\\
			> &
			\frac{(1  - \eta)(7 + 2 \epsilon)}{14}
			\\
			=&
			\bigparen{\frac{1}{2} - \frac{\eta}{2}} + \frac{\epsilon (1 - \eta)}{7}.
		\end{align*}
		If $\alpha_u > 2 + 2 \epsilon$, bidder $u$ will win good $\underline{u}$ in whole.
		The total value of incoming edge goods and $\underline{u}$ is $\frac{3}{2} - \frac{\eta}{2}$, but the payment is strictly larger than $\frac{3}{2} - \frac{\eta}{2} + \frac{\epsilon (1 - \eta)}{7} = \bigparen{\frac{3}{2} - \frac{\eta}{2}} (1 + \delta)$.
		(Winning $\bar{u}$ could only deviate from the target ROI further.)
		Therefore $\alpha_u \leq 2 + 2 \epsilon$, i.e., $y_u \leq \epsilon$.
		\item $\sum_{w \in N_u} y_w < \frac{1}{2}- \epsilon$.
		Now we have $\sum_{w \in N_u} \alpha_u < 2|N_u| + 1 - 2\epsilon$.
		If $\alpha_u < 4 - 2\epsilon$, then $\frac{4 - 2\epsilon}{4(1 - \eta)} < 1$, and bidder $u$ only wins incoming edge goods and $\underline{u}$.
		In this case the total payment is strictly lower than $\frac{3}{2} - \frac{\eta}{2} - \frac{\epsilon (1 - \eta)}{7} = \bigparen{\frac{3}{2} - \frac{\eta}{2}}  (1 - \delta)$.
		\item $\sum_{w \in N_u} y_w \in \left[\frac{1}{2} - \epsilon, \frac{1}{2} + \epsilon\right]$.
		From the above two case analyses, we can learn that the ratio of payment to valuation from buying incoming edge good and $\underline{u}$ falls in the feasible range $\bigbrackets{1 - \delta, 1 + \delta}$.
		Therefore $\alpha_u$ could be any number in $[2, 4]$.
	\end{itemize}
\end{proof}

We move on to replace a reserve price of value $r$ with the gadget shown in Table \ref{tab:reserve_price_gadget}.
$j_0$ is the good for which we want to set a reserve price.
Note that, besides auxiliary bidders, there may be other bidders who are interested in $j_0$ but not shown in the table.
In Lemma \ref{lemma:auxiliary_range}, \ref{lemma:perturbed_equilibrium} and Corollary \ref{corollary:reserve_price}, the parameter $(\eta, \delta)$ is \textit{generic} and different from the one used in constructing the auto-bidding market.

\begin{table*}[t]
	\centering
	\begin{tabular}{c ccc}
		\toprule
		&  auxiliary good $j_1$ & auxiliary good $j_2$   &  target good $j_0$
		\\
		\midrule
		auxiliary bidder $i_1$ & 0 & $2r$ & $r$
		\\
		auxiliary bidder $i_2$ & $r/2$ & $r$ & $r/2$
		\\
		\bottomrule
	\end{tabular}
	\caption{Valuations of a reserve price gadget of value $r$ for good $j_0$. The gadget resembles the example in Appendix \ref{app:equilibrium_definition_rationale} that shows the non-existence of PNE and motivates us to define our own solution concept.}
	\label{tab:reserve_price_gadget}
\end{table*}

\begin{lemma} \label{lemma:auxiliary_range}
	At an $(\eta, \delta)$-approximate auto-bidding equilibrium with a reserve price gadget as in Table \ref{tab:reserve_price_gadget}, the price $p_{j_2}$ of good $j_2$ is within the range $[2r(1 - \eta), 2r(1 + \delta)]$, and we have $\alpha_{i_1} \in \left[1, \frac{1 + \delta}{(1 - \eta)^2}\right], \alpha_{i_2} \in \left[2(1 - \eta), \frac{2(1 + \delta)}{1 - \eta}\right]$.
\end{lemma}
\begin{proof}
	If $p_{j_2} < 2r(1 - \eta)$, then $\alpha_{i_2} < 2(1 - \eta)$ and $i_2$ wins only $j_1$ with value $r/2$ but pays nothing, violating the maximal pacing condition.
	The argument also shows $\alpha_{i_2} \geq 2(1 - \eta)\alpha_{i_1} \geq 2(1 - \eta)$.
	
	If $p_{j_2} > 2r(1 + \delta)$, then $i_1$'s ROI-constraint would be violated if it won $j_2$, and $j_2$ could only be fully sold to $i_2$.
	If so, however, $i_2$ would pay strictly more than $2r(1 + \delta)$ on $j_1$ and $j_2$, but they only generate a value of $3r/2$, violating its ROI feasible condition.
	(In this case, $p_{j_0} \geq p_{j_2} > \max(v_{i_1, j_0}, v_{i_2, j_0})$. Winning $j_0$ will only deviate the target ROI further for both $i_1$ and $i_2$.)
	
	If $\alpha_{i_2} > \frac{2(1 + \delta)}{1 - \eta}$, $i_1$ still does not want to win anything and the same argument can be applied as above.
	
	The range of $\alpha_{i_1}$ is deduced from the range of $\alpha_{i_2}$ by the inequality $\alpha_{i_2} \geq 2(1 - \eta)\alpha_{i_1}$.
\end{proof}

\begin{lemma} \label{lemma:perturbed_equilibrium}
	Suppose that $(\alpha, x)$ is an $(\eta, \delta)$-approximate auto-bidding equilibrium.
	If $\alpha'$ satisfies that, for each bidder $i$, $\frac{\alpha'_i}{\alpha_i} \in [a, b]$ for some constant $a \in (0, 1)$ and $b > 1$, then $(\alpha', x)$ is an $(\eta', \delta')$-approximate auto-bidding equilibrium with $\eta' = 1 - \frac{a(1 - \eta)}{b}$ and $\delta' = \max\{b(1 + \delta) - 1, 1 - a(1 - \delta)\}$.
\end{lemma}
\begin{proof}
	With multipliers $\alpha$, the ratio of the highest bid to the lowest bid to win is at most $1 / (1 - \eta)$.
	With $\alpha'$, the ratio is at most $\frac{b}{a(1 - \eta)} = \frac{1}{1 - \eta'}$.
	
	Similarly, the payment should satisfy $b(1 + \delta) \leq 1 + \delta'$ and $a(1 - \delta) \geq 1 - \delta'$.
\end{proof}

\begin{corollary} \label{corollary:reserve_price}
	Suppose that $(\alpha, x)$ is an $(\eta, \delta)$-approximate auto-bidding equilibrium of an auto-bidding market with reserve price gadgets.
	Let $\alpha'$ be identical to $\alpha$ except that, for each good $j_0$ equipped with a reserve price gadget of value $r$, $\alpha'_{i_1} = 1$ and $\alpha'_{i_2} = 2$ where $i_1$ and $i_2$ are the auxiliary bidders associated with $j_0$ as defined in Table \ref{tab:reserve_price_gadget}.
	Then $(\alpha', x)$ is an $(\eta', \delta')$-approximate auto-bidding equilibrium where $\eta' = 1 - \frac{(1 - \eta)^4}{1 + \delta}$ and $\delta' = \max\bigbraces{\frac{\eta + \delta}{1 - \eta}, 1 - \frac{(1 - \delta)(1 - \eta)^2}{1 + \delta}}$.
	
	Furthermore, when restricted to non-auxiliary bidders and goods, $(\alpha', x)$ is still an $(\eta', \delta')$-approximate auto-bidding equilibrium with the corresponding reserves prices.
\end{corollary}

Now we can put all components together to finish the proof.
\begin{proof}[Proof of Theorem \ref{thm:ppad_hardness}.]
	Suppose that we are given an approximation parameter $\epsilon$ and a directed graph $(V, E)$ where the  in-degree and the out-degree of any vertex are at most 3.
	We can construct an auto-bidding market (with reserve price gadgets) as in Table \ref{tab:ppad_main_construction} and \ref{tab:reserve_price_gadget} with parameter $(\eta_1, \delta_1)$.
	
	Suppose that $(\alpha, x)$ is an $(\eta_2, \delta_2)$-approximate auto-bidding equilibrium of the market.
	We can construct $(\alpha', x)$ for the corresponding auto-bidding market with reserve prices (and without reserve price gadgets) as in Corollary \ref{corollary:reserve_price} such that (1) it is an $(\eta_3, \delta_3)$-approximate auto-bidding equilibrium; (2) $\eta_3$ and $\delta_3$ go to zero as $\eta_2$ and $\delta_2$ go to zero; (3) reserve prices are set properly.
	
	Construct an action profile $y$ of the original threshold game by setting $y_u = \frac{1}{2} \alpha'_u  - 1 \in [0, 1]$ for every $u \in V$.
	By choosing sufficiently small $(\eta_2, \delta_2)$ in the previous step, we can made $\eta_3 \leq \eta_1, \delta_3 \leq \delta_1$.
	Then Lemma \ref{lemma:ppad_main_construction} can be applied to show that $y$ is an $\epsilon'$-approximate equilibrium where $\epsilon = \epsilon(\eta_1, \delta_1)$ goes to zero as $\eta_1$ and $\delta_1$ go to zero.
	
	All construction can be done in poly-time, and with sufficiently small $\eta_1, \delta_1 > 0$, the realized approximation ratio $\epsilon'$ could be made smaller than the target $\epsilon$.
\end{proof}

\paragraph{Further remarks.}
The reduction also maintains the sparse structure of the original threshold game in the sense that each bidder is only interested in at most 8 goods and each good is only valued positively by at most 3 buyers (the reserve price gadget can be removed if an edge good is associated with an edge).

\section{Proof of Theorem \ref{thm:complexity} (APX-hardness of Finding Optimal Equilibrium)}
\label{app:proof_apx_hardness}

\begin{proof}[Proof of Theorem \ref{thm:complexity}]
	We prove the result by an L-reduction from the well-known APX-complete problem: MAX-3SAT-3 \cite{ausiello2012complexity}.
	An instance of 3SAT consists of $n$ variables $\{x_i\}$ and $m$ clauses of the form $(l_1 \lor l_2 \lor l_3)$ where $l_k \in \{\pm x_i \}, k = 1, 2, 3$ is a literal of some variable.
	The optimization problem MAX-3SAT is to find the assignment $\{0, 1\}^n$ to $\{x_i\}$ that maximizes the number of satisfied clauses.
	MAX-3SAT-3 is a further restriction of MAX-3SAT where each variable appears at most 3 times.
	
	We reduce an arbitrary MAX-3SAT-3 instance with $n$ variables and $m$ clauses to the following auto-bidding market.
	For every variable $x_j$, create bidders $1^{x_j}, 2^{x_j}$ and goods $1^{x_j}, 2^{x_j}$, with $v_{1, 1} = v_{2, 2} = 0.05/n, v_{1, 2} = v_{2, 1} = 0.025/n$.
	For every clause $c$, create bidders $3^c, 4^c, 5^c$ and goods $3^c,  4^c$ with $v_{3, 3} = 0.5, v_{3, 4} = 0.1, v_{4, 4} = v_{5, 4} = 0.5$.
	For ``clause'' goods, we associate bidder $1^{x_j}$ with the literal $+x_j$, bidder $2^{x_j}$ with the literal $-x_j$, and the value $v_{1, 3} = 0.1$ if $+x_j$ occurs in the clause $c$, $v_{2, 3} = 0.1$ if $-x_j$ occurs in the clause.
	Valuations not mentioned are set zero.
	
	The following three results will be used to connect an auto-bidding equilibrium with an assignment of variables.
	\begin{enumerate}
		\item For any $x_j$, at equilibrium, $\min(\alpha_{1^{x_j}}, \alpha_{2^{x_j}}) \leq 2$, otherwise the total price of goods $1^{x_j}$ and $2^{x_j}$ will exceed $0.1/n$, the welfare generated is at most $0.1/n$, and bidder $1^{x_j}$ and $2^{x_j}$ cannot compensate this deficit by winning other goods.
		\item Bidder $1^{x_j}$ and bidder $2^{x_j}$ cannot win good $3^c$. This is because that winning any fraction of good $3^c$ requires an $\alpha \geq 5$, and since $\min(\alpha_{1^{x_j}}, \alpha_{2^{x_j}}) \leq 2$, one of $1^{x_j}$ and $2^{x_j}$ will win both good $1^{x_j}$ and $2^{x_j}$ with a total price at least $0.075/n$, which is already binding the ROI-constraint, and winning more low-ROI goods can only violate it.
		This also implies that, at equilibrium, if one of $\alpha_{1^{x_j}}$ and $\alpha_{2^{x_j}}$ is larger than 2, the other will be 1.
		\item The price of good $4^c$ is 0.5 at any equilibrium, otherwise either bidder $4^c$ or $5^c$ will violate their ROI-constraints, or bidder $3^c$ will win good $4^c$ in whole, in which case its value is at most 0.6 but it pays strictly more.
	\end{enumerate}

	Given an equilibrium of the market, if the price of good $3^c$ lies in $(0.2, 0.5)$ for some $c$, then the runner-up $i^{x_j}$ has a multiplier larger than $2$, which means it will win both good $1^{x_j}$ and $2^{x_j}$. If the price of good $3^c$ is less than $0.5$, bidder $3^c$ will raise bid to win a positive fraction of good $4^c$. In this case, we can increase the multiplier of $i^{x_j}$ to 0.5 to increase the revenue.
	Therefore we can compute another equilibrium (in poly-time) where the price of good $3^c$ is either $0.5$ or no larger than $0.2$ for all $c$.
	Furthermore, if the price of good $3^c$ lies in $(0.1, 0.2]$, then for every variable $x_j$ appearing in $c$ we have $\alpha_{1^{x_j}} \leq 2$ and $\alpha_{2^{x_j}} \leq 2$.
	Fix such a $x_j$ and let $C(+x_j)$ be the set of clauses which $+x_j$ appears in and have a price no larger than 0.2.
	Define $C(-x_j)$ similarly.
	Suppose WLOG that $|C(+x_j)| \geq |C(-x_j)| \geq 1$.
	Then by setting $\alpha_{1^{x_j}} = 5$ and $\alpha_{2^{x_j}} = 1$, the revenue increases by at least $0.3 |C(+x_j)| - 0.1 |C(-x_j)| - 0.05/n > 0$.
	Hence we can compute yet another equilibrium where the price of good $3^c$ is either 0.1 or 0.5 for every clause $c$.
	
	Now construct an assignment of the MAX-3SAT-3 instance by setting to TRUE those literals whose associated literal buyer has a multiplier strictly larger than 2.
	If the multipliers of two literal buyers are both not larger than 2, assign TRUE/FALSE arbitrarily.
	The assignment is feasible since  $\min(\alpha_{1^{x_j}}, \alpha_{2^{x_j}}) \leq 2$.
	
	Let OPT(A) and OPT(B) be the optimal objective value of the MAX-3SAT-3 instance and the optimal revenue at some equilibrium in the constructed market, respectively.
	Also suppose that the revenue of the equilibrium used to construct the assignment is $T$, and the corresponding assignment satisfies $m'$ clauses.
	To show that the above two-way construction forms an L-reduction, we need to show that for some constants $\beta, \gamma > 0$: (1) $\text{OPT(B)} \leq \beta \cdot \text{OPT(A)}$; (2) $\text{OPT(A)} - m' \leq \gamma \bigparen{\text{OPT(B)} - T}$.
	
	For condition (1), the optimal welfare of any equilibrium is at most $2 \times \frac{0.05}{n} \times n + 2 \times 0.5 \times m = 0.1 + m < 4n$ (where every good is allocated to the bidder with the highest valuation), and the (optimal) revenue is always bounded by welfare.
	For the MAX-3SAT-3 instance, at least $m/2 \geq n/6$ clauses can be satisfied in the optimal assignment.
	So $\text{OPT(B)} \leq 24 \cdot \text{OPT(A)}$.
	
	For condition (2),
	 if $\text{OPT(A)} = m'$, the inequality holds for any $\gamma$.
	Below we assume $\text{OPT(A)} - m' \geq 1$.
	By the construction, if clause $c$ is not satisfied, the price of good $3^c$ is 0.1.
	And if it is satisfied, the price is 0.5.
	Then we have
	\begin{displaymath}
		T \leq 0.1 + 0.5m + 0.1(m - m') + 0.5m'.
	\end{displaymath}
	On the other hand, from the optimal assignment we can construct an equilibrium by setting the multiplier of TRUE literal to 5 and FALSE literal to 1 (other ones can be easily set).
	As a result,
	\begin{displaymath}
		\text{OPT(B)} \geq 0.5m + 0.5 \text{OPT(A)} + 0.1 \bigparen{ m - \text{OPT(A)} }.
	\end{displaymath}
	Combining both and we have:
	\begin{align*}
		\text{OPT(B)} - T
		\geq &\,
		0.4 \bigparen{ \text{OPT(A)} - m'} - 0.1
		\\
		\geq &\,
		0.3 \bigparen{ \text{OPT(A)} - m'},
	\end{align*}
	which concludes the L-reduction.
	
	By our construction, the revenue is always equal to welfare. Thus finding the welfare-optimal equilibrium is also APX-hard.
\end{proof}

\section{Algorithms and Experimental Market Instances}
\label{app:algorithms}

We develop two algorithms to which we will refer throughout the paper as the \textit{MIBLP} and the \textit{iterative method}, respectively.
Before giving technical details, we first discuss their usages and implications in this paper.

Market instances used in our experiments are categorized into  \textit{realistic} and \textit{synthetic}.
Realistic instances are simply the \textit{raw} bidding data taken from a large real-world ad platform.
Appendix \ref{app:market_instances} will explain how each class of synthetic instances is generated.
Equilibrium conditions will be programmatically checked for every solution we get.
For synthetic instances, all constraints should be approximately satisfied.
If the iterative method does not converge within the specified time, the result will be discarded.
For realistic instances, the algorithm stops if the metrics of top bidders (in particular, those reported in the text) have approximately converged.

MIBLP is able to accurately compute the objective value (e.g., the maximum and minimum utility an advertiser can achieve at equilibrium).
If the iterative method is used, the objective can only be aggregated over a subset of all the equilibria, but it should \textit{not} affect our results since it provides a \textit{lower bound}, which is enough for our purpose (e.g., the utility instability issue can only be \textit{worse} if there are more unexplored equilibria in Appendix \ref{sec:instability}).

The convergence of the iterative method on realistic instances is surprisingly consistent: for every realistic instance tested in our pre-tests (not shown in this paper), we try multiple parameter configurations for the algorithm and it always returns essentially the same equilibrium (up to negligible differences).
Based on this, we only run the iterative method once for each realistic instance.
The performance does \textit{not} mean that: (1) the instance has a unique equilibrium; (2) the algorithm will work well on realistic value distribution.
The latter deserves some further discussion that concerns a limitation of our realistic dataset.

The platform from which our data are taken employs a \textit{multi-stage} auction mechanism,
where valuations will only be predicted at the last stage for a small subset of bidders who are selected through previous stages as the most competitive ones.\footnote{The multi-stage mechanism is employed almost everywhere in the industry, since it is simply impossible to predict the valuations and sort the bids of all advertisers for every single auction. For stages other than the last, there are different kinds of scores to rank bidders, but they cannot be directly mapped to the valuation.}
As a result, our realistic instances consist only of valuations  of bidders surviving to the last stage of each auction.
In practice, the filtering procedure will take the current multipliers $\alpha$ into account (possibly in some obscure way), and the set of bidders selected into the last stage will change dynamically.
It is impractical to simulate this in our paper since we only have access to historic data and it will be very costly to acquire counterfactual information that is not already produced.
As shown by markets constructed in the hardness proofs, filtering out some advertisers for each auction may simplify the problem, but the set of equilibria may change as well.\footnote{Another evidence is that, as we clustered the realistic data into markets of different sizes (experiments not shown in this paper), the iterative method does converge more and more slowly as the market size increases. But keep in mind that clustered markets differ from both the original and each other in structure.}
Nonetheless, our realistic instances are valid markets in themselves, and we believe that this limitation would not significantly affect our results and the reasoning behind.

\subsection{MIBLP Formulation}
\label{subsec:miblp}


We formulate the problem of computing an equilibrium using Mixed-Integer BiLinear Programming (MIBLP), which allows  general (non-convex) \textit{quadratic} terms in constraints and objectives.
Previously, Conitzer et al. \shortcite{conitzer2021multiplicative} avoid non-linearity by only including payments into decision variables, while allocations could only be deduced from the solution after the solver terminates.
As a result, welfare cannot appear in the objectives of their formulation.
In our case, quadratic terms arise inevitably since ROI-constraints must be expressed using (variables of) both allocations and payments.
It turns out that our formulation does not induce higher time-complexity, but has the advantage of directly optimizing a broader set of objectives.

The constraints of the formulation are given below.
\begin{displaymath}
	\arraycolsep=3pt
	\begin{array}{ll}
		\sum_{i} s_{i, j} = p_j, \forall j &
		w_{i, j} \leq d_{i, j}, \forall i, j \\
		s_{i, j} \leq M d_{i, j}, \forall i, j &
		\sum_i w_{i, j} = 1, \forall j \\
		h_j \geq \alpha_i v_{i, j}, \forall i, j &
		\sum_i r_{i, j} = 1, \forall j \\
		h_j \leq \alpha_i v_{i, j} + (1 - d_{i, j}) M, \forall i, j  &
		r_{i, j} + w_{i, j} \leq 1, \forall i, j \\
		p_j \geq \alpha_i v_{i, j} - w_{i, j} M, \forall i, j
		&
		v_{i, j} s_{i, j} = p_j u_{i, j}, \forall i, j \\
		p_j \leq \alpha_i v_{i, j} + (1 - r_{i, j})M, \forall i, j 
		&
		\sum_j s_{i, j} = \sum_j u_{i, j}, \forall i
		\\
	\end{array}
\end{displaymath}
Table \ref{tab:miblp_variables} explains the meaning of each symbol. All the constraints but the last one are meant to encode the auction rules, which are essentially equivalent to those by Conitzer et al. \shortcite{conitzer2021multiplicative} except that we can directly use variables denoting allocations. We refer readers to their work for a proof of correctness.
\begin{table}[h]
	\begin{center}
		\begin{tabular}{llp{3.5cm}}
			\toprule
			Variable & Range & Meaning
			\\ \midrule
			$\alpha_i$ & $[1, +\infty)$ & multipliers \\
			$s_{i, j}$ & $[0, +\infty)$ & $i$'s payment on good $j$ \\
			$p_j$ & $[0, +\infty)$ & second price of good $j$ \\
			$h_j $ & $[0, +\infty)$ & first price of good $j$ \\
			$v_{i, j}$ & constant & valuations \\
			$u_{i, j} $ & $[0, +\infty)$ & the fraction of valuation won by $i$ from good $j$ \\
			$d_{i, j} $ & $\{0, 1\}$ & 1 if $i$ is among the first-price winners \\
			$w_{i, j} $ & $\{0, 1\}$ & 1 if $i$ is the designated first-price setter
			\\
			$r_{i, j} $ & $\{0, 1\}$ & 1 if $i$ is the designated second-price setter \\
			$M$ & a large constant &  used for the Big M method \\
			\bottomrule
		\end{tabular}
	\end{center}
	\caption{Symbols used in the MIBLP formulation.}
	\label{tab:miblp_variables}
\end{table}%

To enforce the ROI-feasible and the maximal pacing condition, one way is to introduce an additional binary variable to denote whether bidder $i$ has a binding ROI-constraint.
However, in this paper, we will always know a priori that no bidder could dominate all the auctions it participates.
With this presumption, all ROI-constraints should be binding, and we only need a linear constraint (the last one) to do the job.

The formulation itself is an \textit{exact} characterization of the computational task, but practical solvers are inevitably discrete and the solution is technically always an approximate equilibrium.
Nonetheless, the numeric errors in our experiments are small enough to be ignored and we will simply take each returned solution as an exact equilibrium.

\subsection{Iterative Method}
\label{app:iterative}

The iterative method represents a generic class of algorithms where each bidder updates its own multiplier to better respond to the current (or recent) bidding profile.
At each round $t$, goods are allocated and priced according to the current multipliers $\alpha_{i, t}, i \in \{1, \dots, n\}$.
The general multiplier updating rule is
\begin{displaymath}
	\alpha_{i, t + 1} \leftarrow \alpha_{i, t} + d_{i, t} s_{i, t},
\end{displaymath}
where $d_{i, t} \in \{-1, 0, 1\}$ is a better response direction, i.e., to raise bid if the ROI-constraint is satisfied and non-binding, and lower bid if violated, and $s_{i, t}$ is the step size.
The algorithm only returns a single equilibrium, and we will try different parameter configurations (e.g., initial multipliers) if multiple equilibria are to be explored.

Depending on the choice of $d_{i, t}$, the algorithm, if converges, returns different solution concepts.

(1)
$d_{i, t} = \text{sign}\left(\sum_j x_{i, j, t} (v_{i, j} - p_{j, t})\right)$.
If converges, the result is not an approximate equilibrium, but \textit{equilibrium up to tied goods}:
for each $i$, let $T_i$ be the set of goods of which $i$ is a tied winner.
Then it satisfies that $\sum_{j \notin T_i} x_{i, j} (v_{i, j} - p_{j}) \geq 0$ and $\sum_{j \notin T_i} x_{i, j} (v_{i, j} - p_{j}) + \sum_{j \in T_i} (v_{i, j} - p_{j}) \leq 0$.
From an \textit{individual} bidder's perspective, if it were able to freely choose any fraction of goods in $T_j$, $\alpha_i$ would be a best response.
For large instances, the result of a small number of auctions becomes insignificant, and the solution will form an approximate equilibrium then.

(2)
$
d_{i, t} = \frac{1}{2} \text{sign}\left(\sum_j x_{i, j, t} (v_{i, j} - p_{j, t})\right) + \frac{1}{2} \text{sign}\left(\sum_{\tau \in [t - k, t]}\sum_j x_{i, j, \tau} (v_{i, j} - p_{j, \tau})\right).
$
The idea is to take the recent $k$ iterations into consideration so as to reach coordination with others.
If converges, the result is an approximate equilibrium, and the accuracy tends to increase for larger $k$, but with a slower convergence rate.
Note that, to get an approximate equilibrium, the second term alone is sufficient.
The first term is meant to stabilize the dynamics.

The step size acts as a combination of learning rate and gradient.
We do some tuning for different instances. A typical choice is the logarithm of ROAS or
\begin{align*}
	s_{i, t} \propto & \frac{1}{t} \cdot \min \left( \abs{ \sum_j x_{i, j, t} (v_{i, j} - p_{j, t}) }, \right.
	\\
	& \qquad \qquad \left. \abs{ \sum_{\tau \in [t - k, t]}\sum_j x_{i, j, \tau} (v_{i, j} - p_{j, \tau}) } \right).
\end{align*}

If the iterative method is used, different equilibria will be found by running the algorithm for $2n+2$ different configurations of initial multipliers constructed as follows.
Let $\underline{\alpha}$ and $\overline{\alpha}$ be a smaller and a larger multiplier respectively. 
The bidders start with one of the following: (1) all $\underline{\alpha}$; (2) all $\overline{\alpha}$; (3) all $\underline{\alpha}$ except one $\overline{\alpha}$; (4) all $\overline{\alpha}$ except one $\underline{\alpha}$.
For all experiments $\underline{\alpha} = 1.0$.
$\overline{\alpha}$ is computed for each combination of instance type, distribution and market sizes by finding an equilibrium with all initial multipliers set to 1.0 and taking the median of equilibrium multipliers.

\subsection{Synthetic Instance Generation}
\label{app:market_instances}
Synthetic instances of each category are generated as follows.
\begin{itemize}
	\item \textit{Complete}:
	every bidder is interested in every good;
	each $v_{i, j}$ is drawn i.i.d. from $\text{Uniform}[0, 1]$ or $\text{Lognormal}(0, 1)$.
	\item \textit{Sampled}:
	constructed from a complete instance by dropping some $v_{i, j}$ to zero;
	for each good, the number $k$ of bidders to be dropped is drawn uniformly random from $\{0, 1, \dots, n-2\}$ (at least two bidders remain), and then $k$ bidders are dropped uniformly random from $\{1, \dots, n\}$.
	\item \textit{Correlated}:
	for each good $j$, $\mu_j$ is drawn from $\text{Uniform}[0, 1]$, and for each bidder, $v_{i, j}$ is drawn from $\text{Normal}(\mu_j, \sigma^2)$ and truncated to non-negative;
	correlated instances are always sampled.
\end{itemize}


\section{Exploitability by Bidders and Sellers}

Properties studied in this section are mostly presented using \textit{examples} to demonstrate the counter-intuitiveness of equilibrium.
In theory, a counterexample is enough to disprove a general statement, but it varies from market to market how often these (undesirable) events occur.
The non-monotonicity example in Appendix \ref{app:non_monotonicity} is randomly generated.
It is not hard to encounter one in both synthetic and realistic instances, but not frequent either.
For seller competition in Appendix \ref{subsec:seller_competition} or the deviation from multiplicative pacing in Appendix \ref{app:arbitrary_bid}, they may seldom be exploited \textit{on purpose} in reality, but could be triggered by innocent-looking behaviors.
We hope that practitioners could benefit from these new perspectives we provide when analyzing complex real-world scenarios.

\subsection{Non-monotonicity}
\label{app:non_monotonicity}

\begin{figure}[t]
	\centering
	\includegraphics[width=\columnwidth]{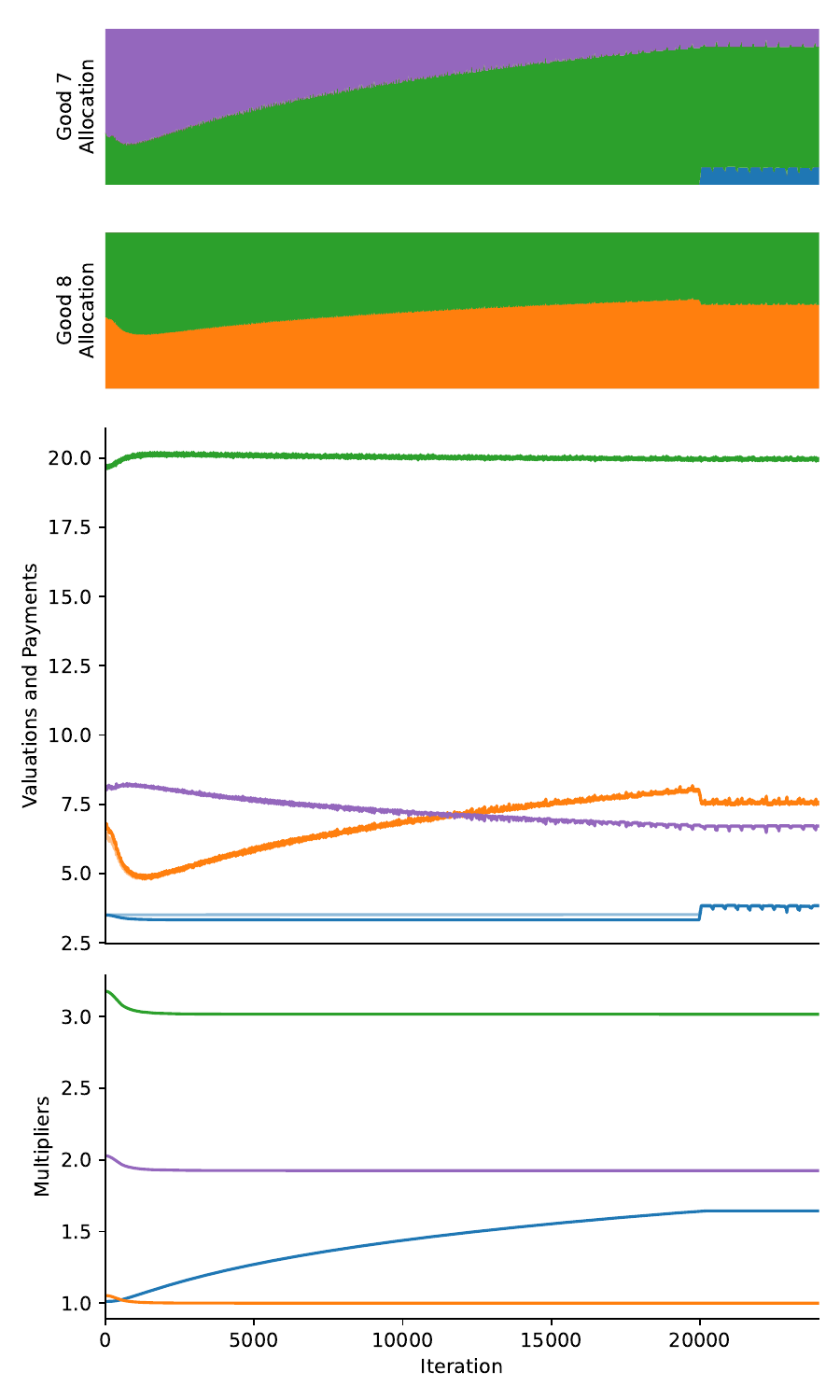}
	\caption{An instance of non-monotonicity.} \label{fig:non_monotonicity}
\end{figure}

If an auto-bidder lowers its multiplier, it will definitely win no more value \textit{immediately}.
However, other bidders will react to this change and the market will shift to a new equilibrium.
Below we demonstrate such an example of transition in detail.

The market consists of 5 bidders and 10 goods.\footnote{As the instance is generated randomly, we hide numbers that may instead confuse readers like valuations and equilibrium profiles.}
Figure \ref{fig:non_monotonicity} shows the dynamics after the change of tROI (technically tCPA) of some bidder at the old equilibrium, with time going by from left to right.
Bidders are denoted by their colors in the figure.
Only 4 bidders are plotted since the rest one wins nothing in both the old and the new equilibrium.
During the transition, only the allocation of good 7 and 8 changes, which are depicted in the top two subplots: the vertical length (relative to the total) denotes the average fraction of auctions won by the bidder within a moving window pivoted at each time step.
In the third subplot, we simultaneously draw the curves for valuations and payments.
For bidders other than Blue, their ROI-constraints are always (almost) binding and thus there is only a single curve for each color.
A twin curve is visible for Blue, whose payment (real line) is slightly less than valuation (translucent line) before around 20000 time steps, and as a result, Blue keeps raising its multiplier as shown in the fourth subplot.

At time 0, our manipulator  {\color{orange} Orange} changes its tCPA to 0.95 of the original, resulting in a valuation profile $v'$ such that $v_{Orange, j}' = 0.95 v_{Orange, j}, \forall j$.
Each bidder applies the iterative method (Appendix \ref{app:iterative}) to optimize its utility.
Recall that, if the current ROI (aggregated over a moving window of recent rounds) is too high, the algorithm will lower its multiplier, and vice versa.
To make the transition smooth, $\alpha_{Orange}$ is divided by $0.95$ right after the tCPA modification, which keeps the fine-grained bids of Orange unchanged at the moment.
$\alpha_{Orange} = 1$ at the old equilibrium and good 8 is the only good of which Orange wins a positive fraction.

Orange initiates the transition by lowering $\alpha_{Orange}$, since its ROI-constraint is now violated after the update of tCPA.
However, Green does not want to win good 8 completely, otherwise its ROI-constraint will also be violated.
So Green lowers its multiplier as well, which further triggers the same behavior for Purple.
As a result, Orange, Green and Purple reach an almost perfect coordination where the  multiplicative ratios among their multipliers remain nearly constant all the way through the transition.
This can also be seen from the fact that Purple and Green always tie for good 7, and Orange and Green always tie for good 8.

Blue's allocation keeps unchanged for the first 20000 time steps.
But it pays less due to the lowered second prices set by the other three and therefore tries to win more goods by gradually raising its multiplier.
During the process, Purple and Green pay more for goods whose second prices are set by Blue, and thus they have to give up goods of which they are one of the tied winners (these goods have the lowest marginal ROIs):
Purple gives up good 7 to Green, and Green wins more good 7 but loses good 8 to balance the deficit, which contributes to the success of the manipulation of Orange.
In the end, Blue takes a fraction of good 7 away from Green to bind its ROI-constraint, and Green compensates this by taking a fraction of good  8 from Orange.
Nonetheless, Orange still benefits from lowering its tCPA.


\subsection{Competition among Sellers}
\label{subsec:seller_competition}

Besides bidder manipulation, there may also be (unintended) competition among sellers. Nowadays a large platform allows a campaign to advertise across its many ad networks, such as search, video, app store, etc. They are typically managed by different teams who are only responsible for their own metrics.
It is not hard to see that, when bidders adjust bids based on the overall performance across the platform (a common practice in the industry), there might be interference among different ad networks.
This may call for a closer coordination among ad networks and drive up the cost of cross-team communication.

To see this quantitatively, we demonstrate an example showing that it is possible to increase the revenue of one’s own ad network while lowering the efficiency of the platform.
We take bidding data of two ad networks within the same platform during the same time period as the valuation profile, which consists of 59330 bidders and 222791 goods.
We will apply different reserve pricing strategies\footnote{Similar to the strategy proposed by Balseiro et al. \shortcite{balseiro2021robust}. The strategy details are not important: we just need to apply \textit{some} treatment.} to \textit{only one} of the ad networks, and observe the treatment effects on \textit{both}.

\begin{figure*}[h]
	\centering
	\includegraphics[width=0.99\columnwidth]{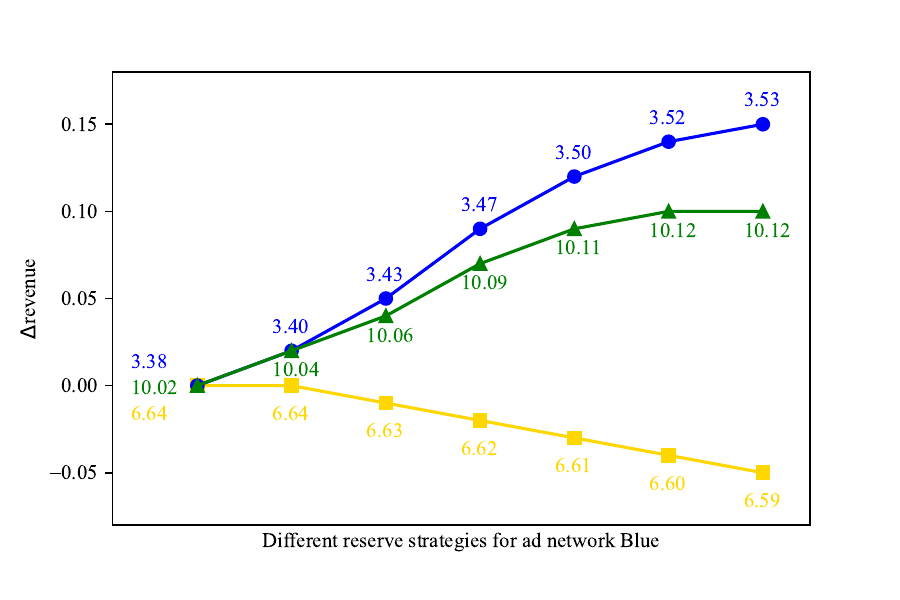}
	\includegraphics[width=0.99\columnwidth]{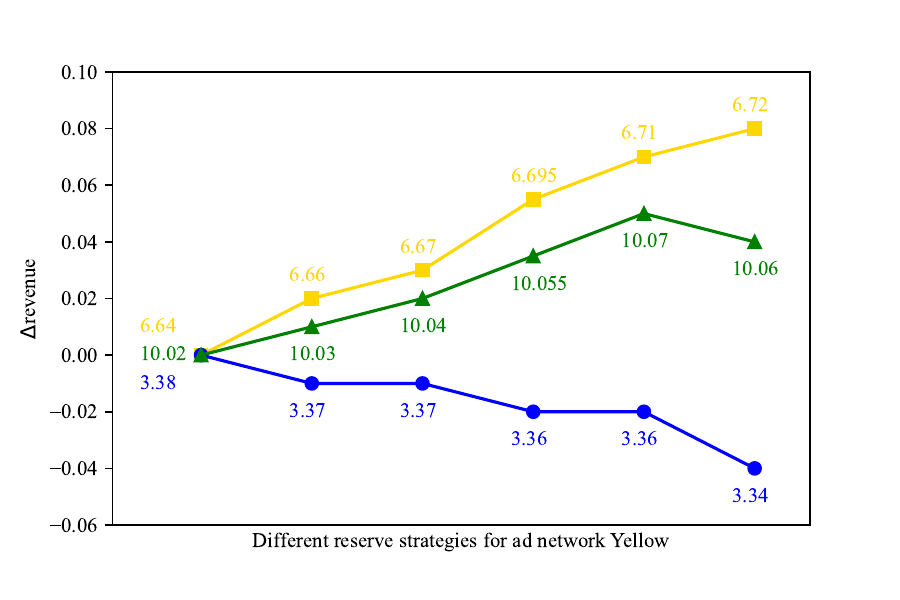}
	\caption{Ad network {\color{blue} Blue} (left) and {\color{Goldenrod} Yellow} (right) apply different reserve pricing strategies to themselves, respectively.} 	\label{fig:ad_networks}
\end{figure*}

The results are given in Figure \ref{fig:ad_networks}.
Ad networks are denoted by colors {\color{blue} Blue} and {\color{Goldenrod} Yellow}, and the platform is their sum {\color{ForestGreen} Green} =  {\color{blue} Blue} + {\color{Goldenrod} Yellow}.
Data points are translated vertically to better compare the net treatment effects, with absolute magnitudes annotated alongside.
In the left plot, Blue applies different reserve pricing strategies to goods of its own.
As its revenue increases, the platform also benefits but at a slower rate since Yellow is hurt.
In particular, Blue's revenue could continue to rise while the platform's keeps flat, which means the revenue it gained comes from pure cannibalization.
The results of the right plot are even worse, as the optimal strategy for Yellow does not align with the optimal one for the platform.

\subsection{Approximate First Price Equilibrium in Second Price Auction Markets}
\label{app:arbitrary_bid}

Apart from the above one-sided exploitability, both bidders and sellers have incentives to deviate from the restriction of multiplicative pacing.

In the non-monotonicity example, if it were able to bid arbitrarily, Orange could secure good 8 in whole by raising its bids for goods other than 8, such that its opponents, particularly Green, would have to pay more and retreat from the competition for good 8.
Though multiplicative pacing is always enough to give a best response \textit{from hindsight}, it is possible to profitably deviate from it, especially for those at a \textit{disadvantage} at equilibrium: e.g., Orange has the \textit{unique highest} valuation for good 8, but only wins a \textit{fraction} of it; similarly for each pair of variable bidders in the proof of APX-hardness (see Appendix \ref{app:proof_apx_hardness}).
If all bidders are fully strategic, the line between first and second price auction may be blurred:

\begin{proposition} \label{prop:first_best}
	Suppose that each bidder $i$ can bid arbitrary $b_{i, j}$ for every good $j$.
	Then there is an equilibrium where $b_{i, j} = \max_k v_{k, j}, \forall i, j$ and the good is freely shared among those bidders with $v_{i, j} = \max_k v_{k, j}$ at a price of $\max_k v_{k, j}$.
\end{proposition}

If first price auction is used, no auto-bidding is needed and the allocation and payment is exactly the same as described above.
For sellers, any ROI-feasible market outcome (with or without multiplicative pacing) has a revenue upper bounded by its welfare, which is further bounded by the \textit{first-best} one, i.e., the sum of the highest values for each good.
With multiplicative pacing, equilibrium revenue of second price auction is typically less than the first-best welfare by a significant margin,\footnote{See the work by Conitzer et al. \shortcite{conitzer2021pacing}. We do not repeat similar experiments in this paper.} while first price auction naturally achieves it.
In second price auction markets, the first-best revenue is achieved at the equilibrium described in Proposition \ref{prop:first_best}, therefore sellers have the incentive to (approximately) do so.
We will illustrate below that how this can be achieved without breaking single-round IC using additive boosts and reserve prices \cite{deng2021towards,balseiro2021robust}.

As mentioned above, with ROI-constraints in place, welfare serves as an upper bound for revenue.
As goods are sold to bidders with lower valuations, the upper bound decreases.
If all bidders' ROI-constraints are binding, the upper bound directly translates to revenue.
Intuitively, additive boosts \cite{deng2021towards} could prevent goods from being allocated to low-value bidders by enlarging the differences of (boosted) valuations.
Bidders with lower valuations would have to bid even higher to win (they will always attempt to do so until binding their ROI-constraints), which could generally elevate the bid landscape and the clearing prices.

Additive boost alone is not enough since it could lower the prices of goods sold to bidders with the highest values.
Deng et al. \shortcite{deng2021towards} did observe a decrease of revenue when boosts are excessively large (while the welfare still increases).
Balseiro et al. \shortcite{balseiro2021robust} fix this with reserve prices, which not only increases the upper bound, but also directly provides a lower bound to avoid cheap sales.

Deng et al. \shortcite{deng2021towards} and Balseiro et al. \shortcite{balseiro2021robust} try to maintain the single-round IC property by imposing that the boosts and reserve prices should be generated from a highly accurate signal on the true (tROI-discounted) valuations  \textit{independently} of behaviors of both auto-bidders and advertisers.
However, if the seller has access to such signals, there is almost no private information left to be elicited from advertisers.
In this case, IC of even the advertiser-game matters little, let alone individual auctions.
In essence, reserve prices behave just like a proxy bidder for the seller.\footnote{This is similar to the \textit{credibility} issue of IC auctions \cite{akbarpour2020credible}. The difference lies in that it may be \textit{implicitly} implemented by widely accepted instruments like additive boosts and reserve prices.}
With more accurate signals, it produces an outcome more similar to the one in Proposition \ref{prop:first_best}.
Boosts effectively does a similar job (though in a less direct manner) such that each bidder faces a more competitive environment in those auctions that it values less than some opponent.


\section{Utility Instability for Advertisers}
\label{sec:instability}
In this section, we investigate the utility instability problem through an extensive set of experiments.
For equilibrium multiplicity, we generate 5 synthetic instances for each combination of type, distribution and market sizes.
We have three types named \textit{complete, sampled} and \textit{correlated}, and either of the first two comes with a \textit{lognormal} or \textit{uniform} distribution (and will be called, e.g., complete lognormal) while correlated ones are parameterized by $\sigma$ that characterizes the intensity of competition.
See Appendix \ref{app:market_instances} on how they are generated.
Instances of \textit{small} sizes have $n = 10, m = 14$ and are solved using MIBLP, and \textit{moderate} sizes have $n = 50, m \in \{100, 200, 300\}$ and solved using the iterative method.
For valuation sensitivity, instances are constructed by modifying the same base realistic instance, which consists of 85596 bidders, 3118648 goods and 14967663 non-zero $v_{i, j}$.

\begin{figure*}
	\centering
	\includegraphics[width=0.85\textwidth]{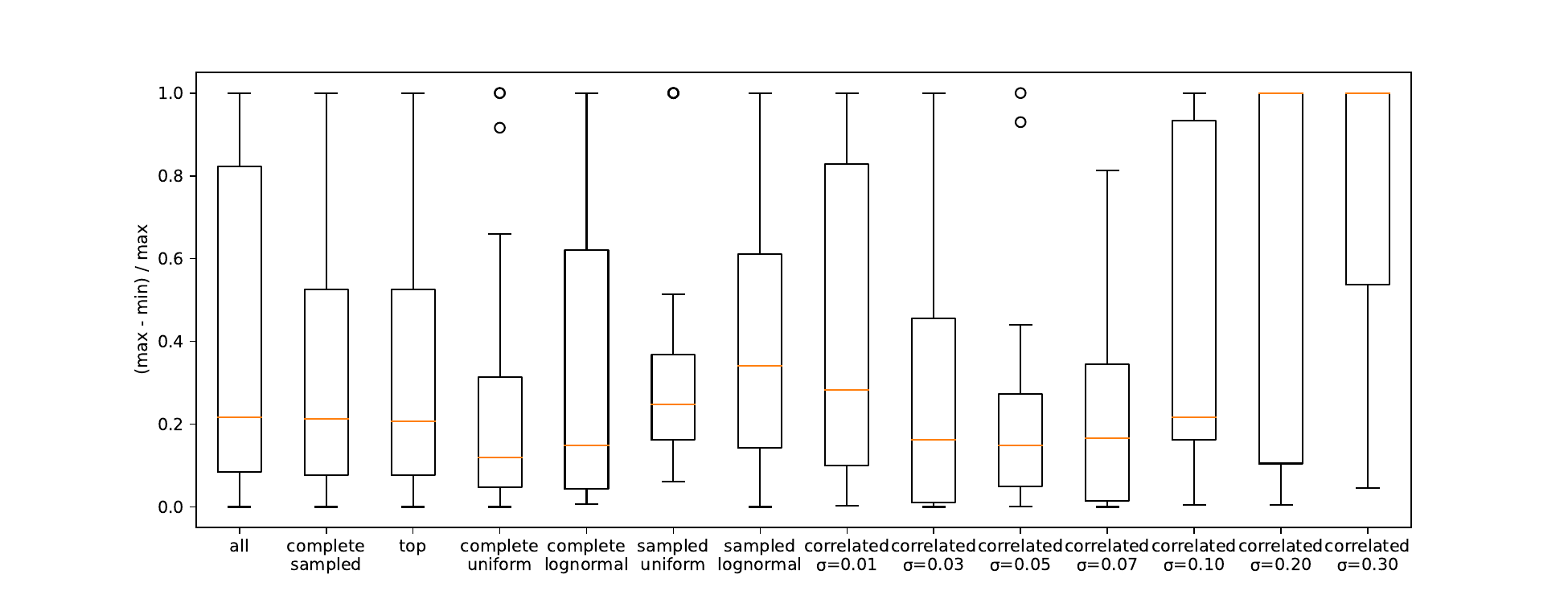}
	\caption{Gap distribution of small instances.} \label{fig:instability_small}
\end{figure*}

\subsection{Equilibrium Multiplicity}
\label{app:multiplicity}

In this section, the \textit{gap} of bidder $i$ is measured as ``the difference between the maximum and minimum values received by bidder $i$ in any equilibrium'' divided by ``the maximum value received by bidder $i$ in any equilibrium.''
Bidders winning nothing in every equilibrium will not be considered.

Experiment results of small instances are given in Figure \ref{fig:instability_small}.
Over all small instances, more than half of bidders have a gap of more than 20\%, and more than a quarter have a gap more than 80\% (see the first box plot).
Even though many extreme cases are contributed by correlated instances, there is still more than a quarter of bidders have a gap of more than 50\% for complete and sampled instances (the second box plot).
We also check the gap distribution for the top 3 bidders of each instance (the third box plot), measured in their acquired valuations if first price auction is used (to make the ranking unique).
They do perform better than the rest, but not much.

The gap distributions differ across classes of instances, with lognormal generally worse than uniform, sampled worse than complete.
Correlated instances exhibit a U-shape w.r.t. $\sigma$: the gap is generally larger for both highly intense and sparse competition, and relatively small for middle ones.
Nonetheless, at least a quarter of bidders always suffer a gap of about 30\%, and there are almost always some unlucky bidders that win something in an equilibrium, but lose completely in another.

\begin{figure*}
	\centering
	\includegraphics[width=\textwidth]{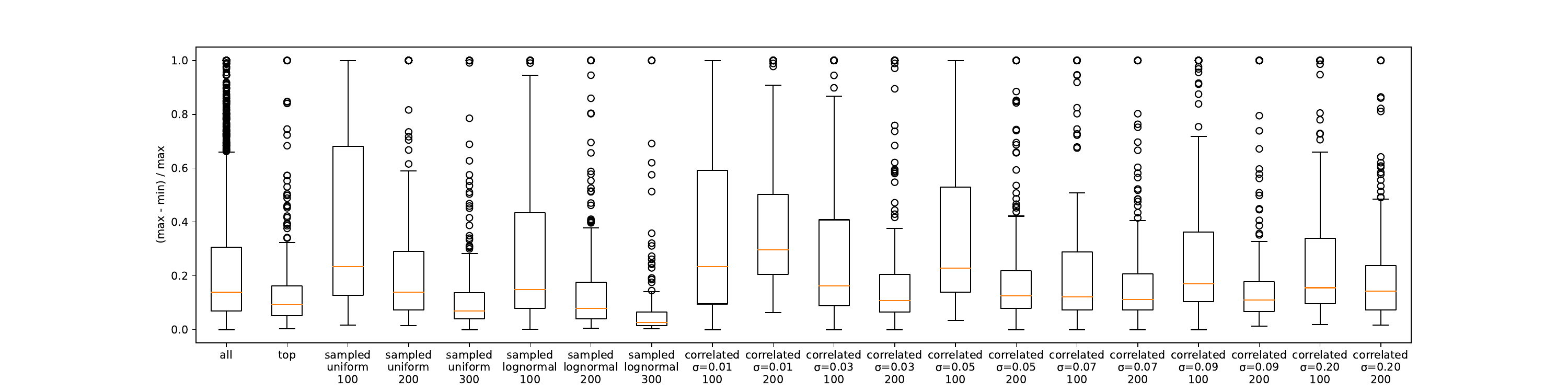}
	\caption{Gap distribution of moderate instances.} \label{fig:instability_moderate}
\end{figure*}

The situation improves for moderate instances, as shown in Figure \ref{fig:instability_moderate}.
We can actually observe a clear trend where the gap distribution contracts as the number of goods increases.
However, large gaps still happen a lot.
Note that, even for instances of the same type and distribution, different sizes may also give quite distinct market structures.
For an evidence, fixing $n = 50$, if $m = 100$, no more than 40 bidders on average could win something in at least one equilibrium.
This number increases to around 45 for 200 goods, and almost 50 for 800 goods (not in the figure).

\subsection{Valuation Sensitivity}
\label{app:sensitivity}

We examine sensitivity of two types, one from an individual's point of view, and the other from the market's.
Figure \ref{fig:sensitivity} consists of two parts: the central box plot and the rest four, each corresponding to one type of sensitivity.

\begin{figure}[h]
	\centering
	\includegraphics[width=\columnwidth]{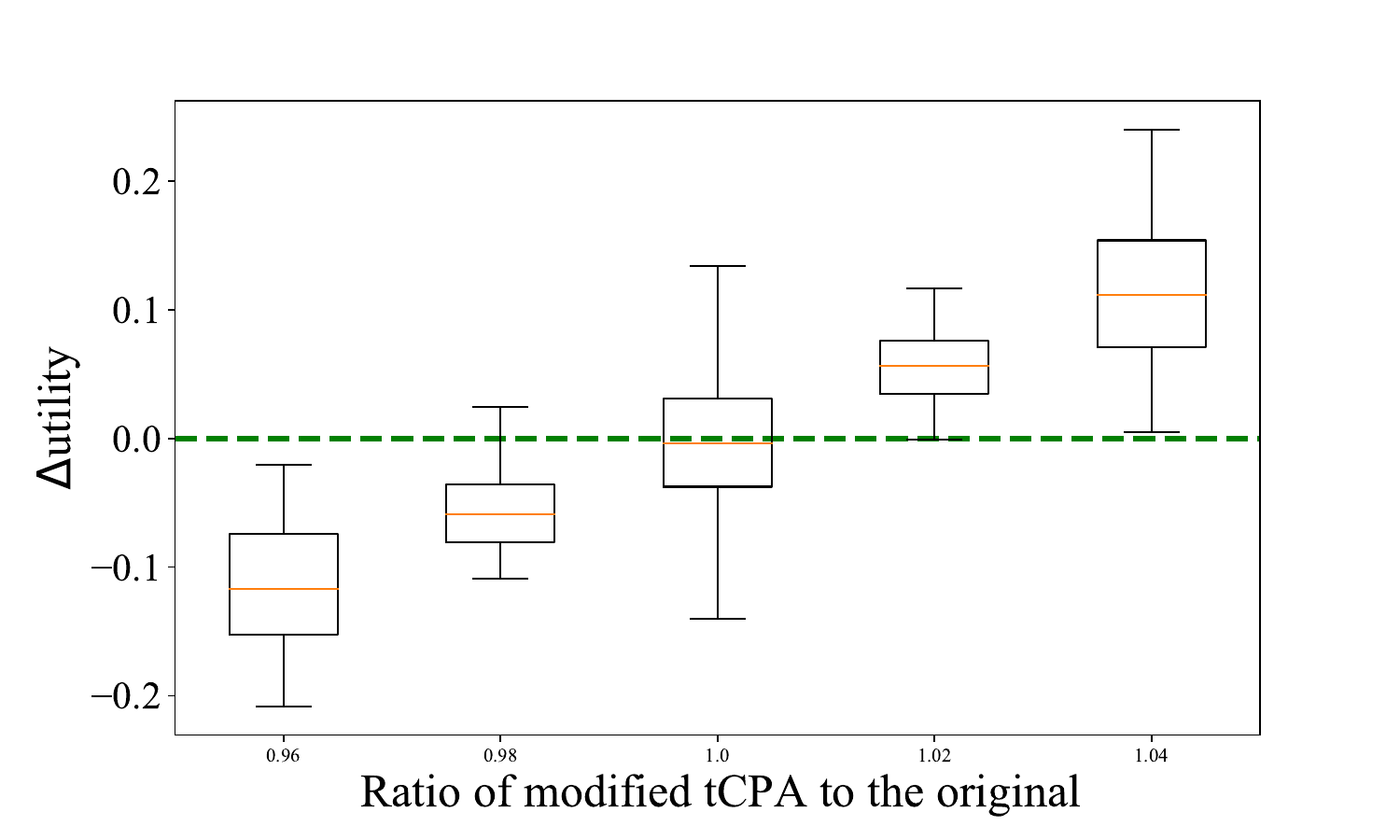}
	\caption{Change of utilities after adding noises. Outliers beyond whiskers are excluded.}
	\label{fig:sensitivity}	
\end{figure}

\begin{figure}[h]
	\centering
	\includegraphics[width=\columnwidth]{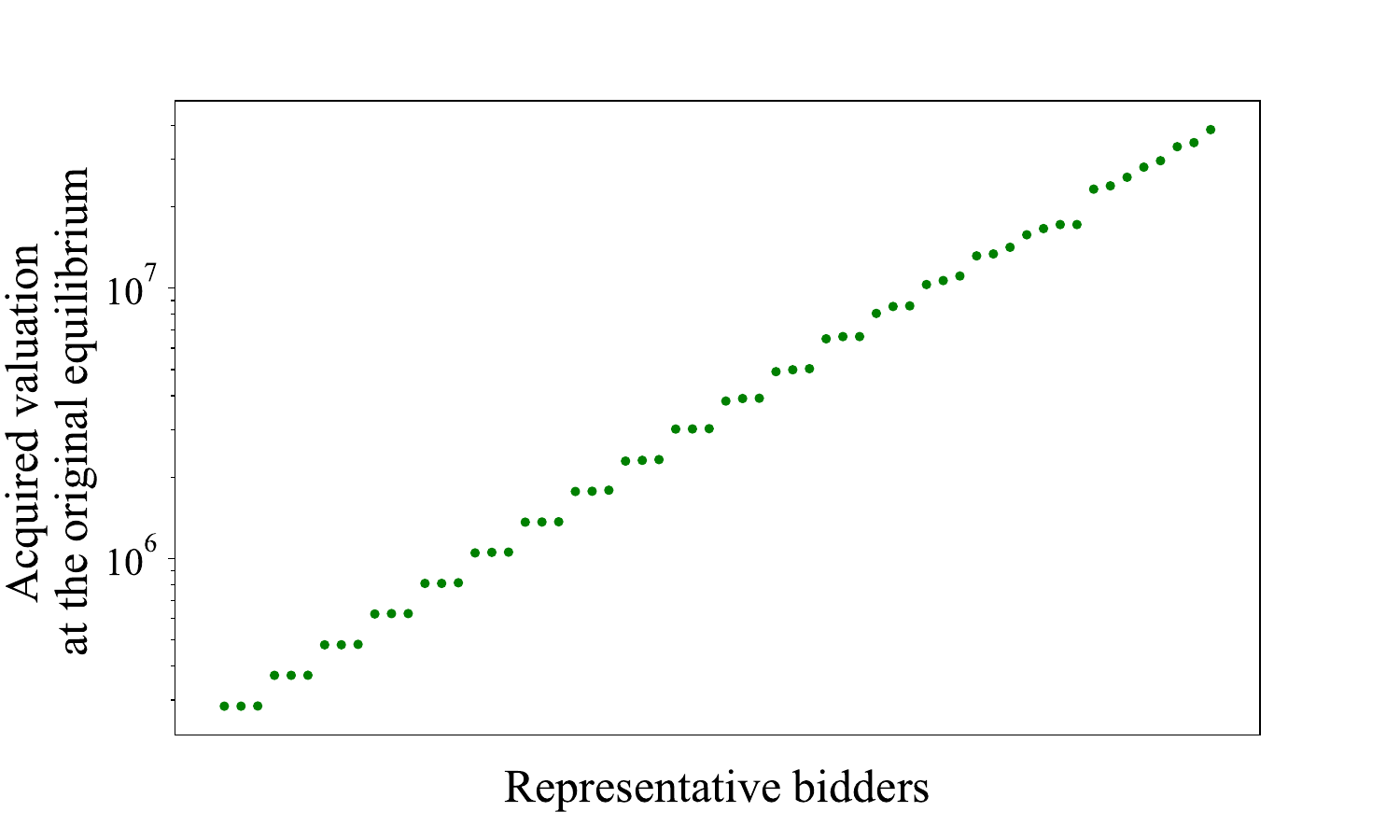}
	\caption{Selected bidders' acquired valuations in the original equilibrium, which are made distributed evenly in the log scale.}
	\label{fig:representative_bidders}	
\end{figure}

For the four non-central box plots, each time we choose a bidder whose tCPA is modified to $x \in \{0.96, 0.98, 1.02, 1.04\}$ times of the original\footnote{In practice, these perturbations can happen when advertisers directly change their tROIs, or could be unintentionally brought by noises of machine learning models, which could bias towards a single advertiser (non-central four box plots) or spread across the whole market (the central one).}
and observe the difference between the equilibrium before and after the change.
To ensure that the selected bidders are representative, we group them into 20 buckets based on their acquired valuations in the old equilibrium, and select three from each bucket (see Figure \ref{fig:representative_bidders}).
First, we find that non-monotonicity still exists, as shown by the whisker above zero for $x = 0.98$.
Second, it holds for a significant portion of bidders that their utilities are quite sensitive to tCPAs. A quarter of bidders can gain or lose more than 15\% of its value by making only a change of 4\% of its tCPA.
In the extreme case,  a 4\% change can bring 45\% gain or 46\% loss.

For the central box plot, we randomly generate 10 modified instances, each of which is constructed by flipping a fair coin for every bidder and adjusting its tCPA to either 0.99 or 1.01 times the original accordingly.
We only focus on the top 4000 bidders measured in their acquired valuations at the original equilibrium.
The gap distribution seems more stable than changing tCPA individually, but it features more extreme cases: about 3\% of bidder-modification pairs have a gap of more than 20\%, and more than 0.5\% have a gap of more than 50\%.

It is worth noting that markets with budget-constrained value-maximizers (in both first and second price auction markets) are completely insensitive to this kind of perturbation that scales a bidder's valuation by a uniform factor.
It has indeed been observed in practice that budget-constrained campaigns outperform ROI-constrained ones in stability.

\subsection{Discussions}
\label{app:discussion_instability}

\textit{When will the utility become more robust against equilibrium selection?}
As implied by our hardness results, it is generally hard to compute how unstable the utility of each bidder is.
Based on the U-shape results of small correlated instances, we hypothesize that:
(1)
When the competition is intense, the boundary between win and loss becomes narrow; when competition is sparse, the winner gains a large value surplus if the opponents' multipliers are low.
In both cases, it is more profitable to commit a large multiplier to claim the high ground in the competition and thwart the opponents.
(2)
With a moderate variance, the winner does not possess a large surplus even if multipliers of opponents stay at 1, and no one can suppress others easily.
In this case, valuations are distributed more evenly across bidders and the utility becomes more stable.

An even value distribution across bidders, or a larger number of competitive bidders does seem to bring more stability with respect to equilibrium multiplicity.
This is further supported by the results of moderate instances, where utility becomes more stable as the number of goods increases but the competition for each good remains intact.
Intuitively, with more competitive bidders, they clamp each other together to form a more stable structure.
However, in this case the market may be more sensitive to the inputs, as shown by the realistic instances.

\textit{Why does duplication work in practice?}
Utility instability has been widely observed in practice.
Advertisers have found a simple countermeasure called \textit{duplication}, which is to create many campaigns with approximately the same configuration.
Platforms will typically prevent auto-bidders of the same advertiser from competing with each other.
Then with more campaigns, the instability of valuation prediction and equilibrium selection could be better neutralized, and the aggregate performance could be smoother.
Our method to find multiple equilibria with the iterative method (Appendix \ref{app:iterative}) is actually motivated by the real-world observation that, at the beginning of a campaign's life, the prediction is more noisy, and duplication will effectively drive up the bid.

Duplication is annoying for platforms since it imposes an extra burden on the system but adds almost nothing informative and valuable.
Some platforms have restricted the maximum number of campaigns that can be run simultaneously by a single advertiser, and algorithms have been implemented to cluster and prune similar campaigns.
To alleviate instability, platforms could actively  boost or suppress bidders (as  in our experiments) based on their historic performance.
Note that, when we apply the Bernoulli $\pm 0.01$ noise to valuations, the total revenue remains stable, with 
a gap between the worst and best only $0.15\%$.
Even when we increase the noise magnitude to 0.02, the gap is just $0.22\%$.
Hence such strategies will not sacrifice the revenue of the platform.

\begin{figure*}[h]
	\centering
	\includegraphics[width=0.99\columnwidth]{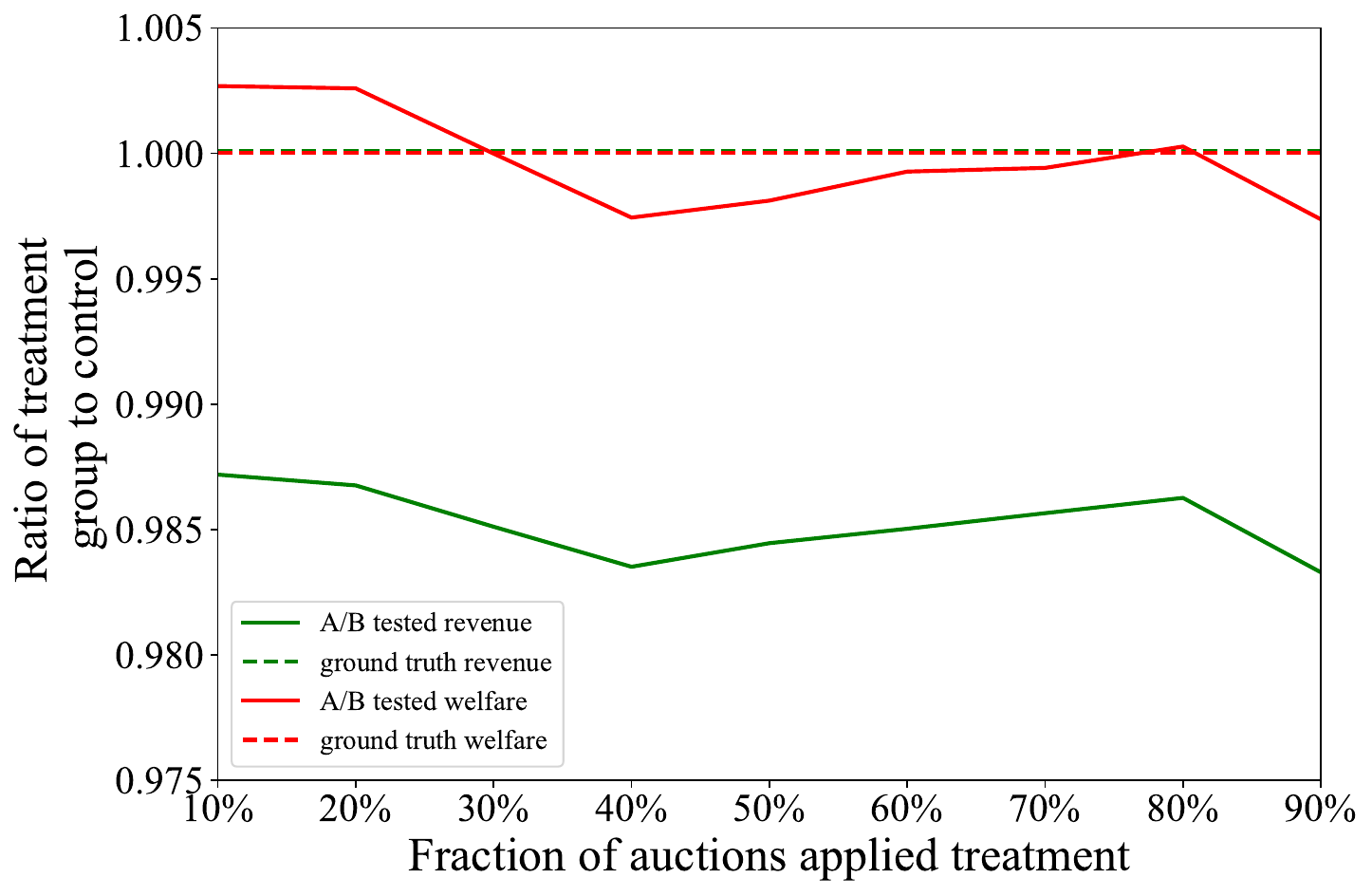}
	\includegraphics[width=0.99\columnwidth]{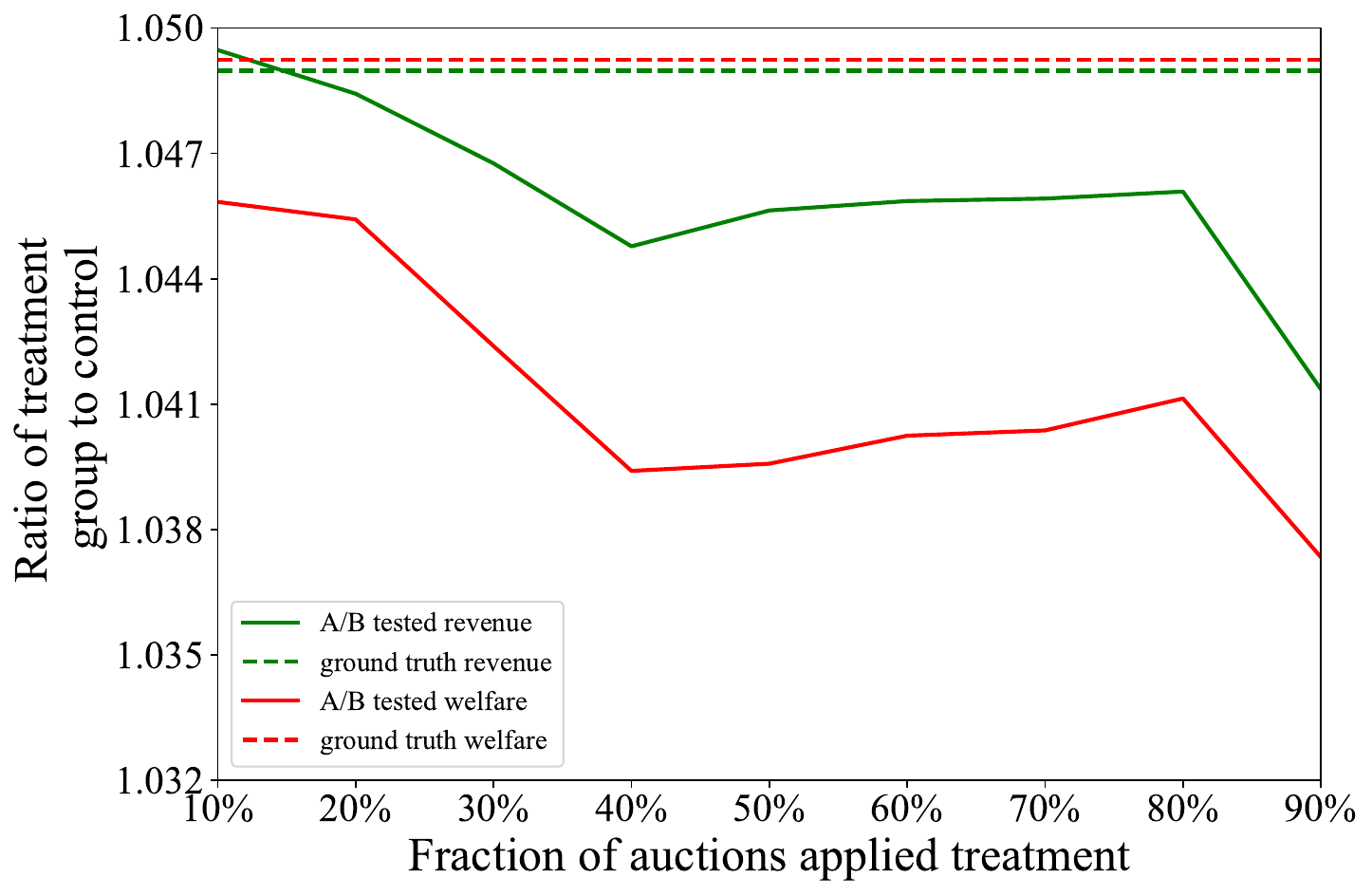}
	\caption{Results of user-side experiments.} 	\label{fig:user_ab}
\end{figure*}

\section{A/B Testing for Platforms} \label{app:ab_testing}
In this section, we investigate the interference in A/B testing of auction markets.
User-side experiments are performed by applying the treatment variant to $r$ percents of auctions ($x$-axis in Figure \ref{fig:user_ab}; for each $r$, results are aggregated over several random partitions to make sure that the biases are not from non-homogeneous user-splitting) uniformly at random and the control variant to the rest.
We will measure the ground-truth treatment effect (dashed lines in Figure \ref{fig:user_ab}) by computing the ratio of  ``metrics when applying the treatment variant to all the auctions'' over ``metrics when applying the control variant to all the auctions.''
The A/B tested treatment effect (real lines in Figure \ref{fig:user_ab}) is the ratio of metrics between experiment groups.
We use realistic bidding data and include the additive boost mechanism.
In the first experiment (left of Figure \ref{fig:user_ab}), the treatment group is assigned a boost of magnitudes $c_{i, j} = 0.05v_{i, j}$ and the control group has no boost.
In the second, we use the boost used in practice for the control group, and half its value for the treatment.
The real-world boost is used by the platform as a measure of long-term utility to the platform. As we can see from Figure \ref{fig:user_ab} that, when it is halved, the \textit{immediate} revenue is indeed increased by sacrificing long-term utilities, and thus its weight should be carefully configured to optimize the overall financial consequences.

For ad-side experiments, to better resemble the situation where ad-side A/B testing is used in the real world, here we deviate a little from our model (but the result can still be explained largely within the framework).
In practice, it is important for a bidding algorithm to handle the uncertainty of the environment.
Therefore we incorporate some stochastic features into the market as follows.
Each market instance consists of 200 bidders and $400 \times 1000$ goods.
Auctions do not happen simultaneously, but arrive one-by-one.
Before each auction $j$,  each bidder $i$ has an unbiased prior estimation $v_{i, j}$ for the value of the good.
In the simulation, $v_{i, j}$ is drawn i.i.d. from the distribution $\min\bigparen{\abs{X}, 1.0}), X \sim \text{Normal}(0, 0.1)$.
The value received by bidder $i$ after winning is \textit{not} $v_{i, j}$, but randomly drawn from $\text{Bernoulli}(v_{i, j})$ (simulating conversions of discrete types in practice).
For each bidder, its bid $b_{i, j} = \alpha_i v_{i, j}$ is still given by the multiplicative pacing strategy, where $\alpha_i$ is controlled by a PID controller (see the work by Zhang et al. \shortcite{zhang2016feedback}) that adjusts the multiplier after every 1000 auctions to optimize the acquired \textit{realized} value subject to that the ROI-constraint should be satisfied approximately.
PID controllers have 4 parameters and we will run A/B tests for different pairs of parameter configurations (100 bidders for each group, and bidders in the same group share the same parameter configuration).

Ad-side experiment groups are evaluated on two classes of metrics: efficiency and stability, both are widely used by practitioners.
Though the market becomes stochastic and dynamic, it is easy to see that, in the long run, multipliers should still converge to the auto-bidding equilibrium for the market with the unbiased estimations $\{v_{i, j}\}$.
Efficiency metrics (the revenue and welfare of the platform) are meant to measure how well the resulting equilibrium performs economically.
On the other hand, since $\{v_{i, j}\}$ is only an expectation and goods arrive in an online fashion, ROI-constraints may not always be perfectly satisfied from hindsight.
So we use the success rate (the proportion of \textit{revenue} generated from bidders ending with a ROAS between 0.975 to 1.025) and failure rate (the proportion of revenue from bidders with ROAS above 1.025) to quantify how well the PID controllers handle the uncertainty for advertisers.
Each pair of parameter configurations is evaluated through A/B testing over 100 market instances, and statistical test is performed with respect to each metrics to determine whether the treatment variant performs significantly better or worse than the control.
The ground-truth is measured by applying the configuration to all bidders and aggregate each metrics over 100 market instances.


\begin{table*}
	\centering
	\begin{tabular}{c c  c  c  c  c}
		\toprule
		naive vs. ground truth &
		setup & revenue & welfare & success rate & failure rate \\
		\midrule
		\multirow{3}{3.6cm}{overestimate efficiency wrong stability}
		& ground truth
		& +1.21\%*
		& +1.10\%*
		& -3.96\%*
		& +11.55\%*
		\\
		& naive
		& +13.18\%*
		& +13.37\%*
		& +2.76\%*
		& -9.31\%*
		\\
		& boosted
		& +0.86\%*
		& +0.91\%*
		& -1.91\%*
		& +4.60\%*
		\\ 
		\multirow{3}{3.3cm}{wrong efficiency wrong stability}
		&
		ground truth
		& -0.67\%*
		& -0.76\%*
		& +2.96\%*
		& -2.10\%
		\\
		& naive
		& +3.12\%*
		& +3.06\%*
		& -0.67\%
		& +3.33\%
		\\
		& boosted
		& -0.64\%*
		& -0.73\%*
		& +0.89\%*
		& +0.86\%
		\\ 
		\multirow{3}{3.3cm}{wrong efficiency wrong stability}
		& ground truth
		& -0.29\%*
		& -0.31\%*
		& -3.56\%*
		& +3.56\%*
		\\
		& naive
		& +13.49\%*
		& +13.36\%*
		& +5.28\%*
		& -1.61\%
		\\
		& boosted
		& -0.24\%*
		& -0.30\%*
		& -4.84\%*
		& +8.38\%*
		\\ 
		\bottomrule
	\end{tabular}
	\caption{Results of three groups of ad-side experiments. Each entry is calculated as the ratio of ``metrics difference of the treatment variant to the control'' over ``the metrics of the control variant''.
		Asterisks indicate statistical significance. Note that a strategy with \textit{lower} failure rate is considered better.} \label{tab:ad_ab}
\end{table*}%

\subsection{Simulation Results}
\label{subsec:ab_testing_results}

\paragraph{User-side experiments.}
Figure \ref{fig:user_ab} depicts results of user-side experiments for two pairs of treatment/control variants.
In the left, the ground-truth treatment effect is nearly zero, but A/B testing shows that we may lose more than 1\% revenue.
Welfare seems more robust in this case, but it is also inconsistent and has a gap of more than 0.5\% between the most optimistic and pessimistic tests.
In the second experiment,  welfare deviates worse than revenue, with a maximum bias of more than 1\%.
Note that a bias of 0.5\% is large enough to affect practitioners' decisions (e.g., the minimum detectable treatment effect in the platform from which our data are taken is roughly at the same scale), and if multiple objectives are to be weighted, the requirement for accuracy is even stricter.
Based on our observation over some more experiments not displayed here, we find that welfare is generally more robust than revenue.
If naive A/B testing is the only option, welfare seems to be more reliable as an overall evaluation metrics.

\paragraph{Ad-side experiments.}
Examples with biases of different types are given in Table \ref{tab:ad_ab} (naive vs. ground truth).
The bias is irregular as we can observe that there may be significant overestimation, or a worse variant may be measured as better, etc.
In particular, in the third example, naive A/B tests tell us that the treatment variant beats the control variant in all metrics by a large margin, but if we apply it to all bidders, the resulting system would instead perform worse.

\subsection{Explanations and a General Approach to Diagnosis and Design}
\label{subsec:invariant}

During user-side A/B testing, the market is a mixture of treatment and control groups, and the resulting multipliers at equilibrium are not equal or even close to those if either variant were applied counterfactually to all the auctions.
The same argument can also explain the failure of ad-side A/B tests on measuring revenue and welfare.
As for success and failure rate, the hypothesis is that, if a significantly more amount of auctions are won by one group (as is the case here), their received values would converge closer to the expectation and the environment becomes less uncertain and easier to deal with.
In summary, in an A/B test, any experiment group (of either users or ads) faces a different environment from the one in the counterfactual ``B/B'' or ``A/A'' test where both groups were assigned the same variant.
Thus to reduce bias, a general principle  is to make the experiment group behave (for each individual bidder/good or at least the group as a whole) \textit{as if} the other group is also assigned the same variant.

The principle might seem too general and it is not immediately clear how to achieve it, in particular in markets involving complex strategic behaviors.
To make it applicable, we propose an approach using the concept of \textit{invariants}.
We start with identifying a metrics as the invariant, which should remain the same regardless of whether the treatment or control variant is applied.
For naive tests, it serves as a signal that there may be bias if this metrics \textit{varies} significantly across experiment groups.
For example, the ROI-constraint should always be binding when aggregated across the market.
But in Figure \ref{fig:user_ab} the ratio of revenue over welfare for each group deviates significantly from 1, which is a signal of bias worth being taken noticed of by practitioners.
As for designing less-biased setups, we should try to restore the balance of the invariant.
In retrospect, the work by Liu et al. \shortcite{liu2020trustworthy} is an instantiation of this approach for user-side experiments where the invariant is the budget spent by each advertiser, and the work by Ha et al. \shortcite{ha2020counterfactual} is one for  non-auction markets with recommendation positions (occupied by each group as a whole) as the invariant.

For user-side experiments in ROI-constrained markets, ROI simply serves as the invariant and we can assign an auto-bidder for each advertiser \textit{and} each experiment group (instead of the whole market as more commonly done in practice).
This computes the correct outcome in theory if there is a unique (and efficiently computable) equilibrium and the two experiment groups are perfectly homogeneous.
For the less-studied ad-side experiments, we instantiate the approach with a new \textit{boosted design} using the \textit{number of auctions won by each group on average} as the invariant.
To make it balanced, right after each episode (here 1000 auctions), a uniform additive boost $c$ for treatment group will be calculated in a counterfactual way such that, if it were applied to the treatment group in the last episode (i.e., if good $j$ were allocated to the bidder with highest $\alpha_i v_{i, j} + c_i$, where $c_i = c$ if $i$ is in treatment group, $c_i = 0$ otherwise), goods would be  split into two groups perfectly in proportion ($r$ percent of ads should win $r$ percent of auctions). 
The boosts will then be applied to the treatment group throughout the next episode.

Table \ref{tab:ad_ab} (with the setup ``boosted'') shows that it works really well for these synthetic instances: it always gives the correct qualitative result (statistically better, worse, or insignificant) and it is also more accurate quantitatively, especially on revenue and welfare.
As a comparison, we try to adapt the counterfactual design by Hu et al. \shortcite{ha2020counterfactual} where treatment and control are applied to all bidders respectively to get two rankings, and then merge them to decide the winner.
Ha et al. \shortcite{ha2020counterfactual} (not designed for auctions) did not specify the payment rule, and we tried several alternatives but no one worked: in the contrary, the number of auctions won by each group is more unbalanced and the bias is severely exaggerated.


\section{First Price Meta-game Equilibrium with Frugal ROI-constrained Value-maximizers}
\label{app:frugal}

Let $(v_1, p_1)$ and $(v_2, p_2)$ be two outcomes where $v_k$ is the total value received by some bidder $i$ and $p_k$ the payment.
If $i$ is a \textit{frugal} ROI-constrained value-maximizing bidder, it strictly prefers $(v_1, p_1)$ to $(v_2, p_2)$ if and only if either (1) $v_1 > v_2$ or (2) $v_1 = v_2$ and $p_1 < p_2$.

In the main text we assume a tROI of zero.
Here we will consider advertisers' behavior in the meta-game, and we will use  $v_{i, j}$ to denote the valuation of advertiser $i$ to good $j$ \textit{discounted} by $i$'s true tROI rather than the reported.
For simplicity we also assume $\argmax_{i} v_{i, j}$ is unique for every good $j$.
\begin{definition}[First Price Meta-game Equilibrium]
	The equilibrium $(x, p)$ consists of allocation $x \in \{0, 1\}^{n \times m}$ and payment $p \in [0, +\infty)^n$ such that
	\begin{itemize}
		\item goods go to the bidder with the highest true tROI-discounted valuation: $x_{i, j} = 1$ if and only if $i = \argmax_k v_{k, j}$, $\forall j$;
		\item bidder pays its total acquired valuation discounted by the minimum tROI required to win the set of goods currently allocated:
		\begin{displaymath}
			p_i = \alpha_i \sum_j v_{i, j} x_{i, j}
			\text{ where }
			\alpha_i = \max_{k, j: x_{i, j} = 1, k \neq i} \frac{ v_{k, j}}{v_{i, j}}.
		\end{displaymath}
	\end{itemize}
\end{definition}

At equilibrium, each advertiser effectively bids $b_{i, j} = \alpha_i v_{i, j}$ in each individual first price auction and ties are broken in favor of the one with the highest (true tROI-discounted) value.
If $i$ bids with a multiplier less than $\alpha_i$, some other bidder could bid truthfully and win an item that is currently allocated to $i$.

The equilibrium is unique and achieves  the first-best welfare (discounted by true tROIs).
It is not clear how to compare the revenue with second price auction, wherein the behaviors of both advertisers and auto-bidders are highly unpredictable.

\section{Fairness and Competitive Equilibrium}
\label{app:fairness}

Budget-pacing equilibrium in first price auction markets with budget-constrained \textit{value}-maximizers is exactly equivalent to the classic competitive equilibrium in linear Fisher markets (we do not have a reference but the correspondence is straightforward from the definition). Conitzer et al. \shortcite{conitzer2021multiplicative,conitzer2021pacing} relate their budget-pacing equilibrium for both first and second price auctions to a modified version of competitive equilibrium for budget-constrained quasi-linear utility-maximizers. First price equilibrium is mathematically closer to the classic competitive equilibrium as the bang-per-bucks of all items allocated to a bidder are equalized, while second price equilibrium requires the buyer to be \textit{supply-aware}. Nonetheless, both outcomes are Pareto-optimal and there seems no consensus on whether one is fairer than the other. Traditionally there is no ROI-constrained version of competitive equilibrium, so we do not discuss this in this paper. But it is easy to see that similar results (e.g., Pareto-optimality) exist with proper definition.

\section{Case Study: Google AdSense's Partial Shift of Auction Format}
\label{app:google_shift}

We have given much mathematical evidence in the main text to establish the dominance of first price auction over second price \textit{within our auto-bidding market model}.
But once an advertiser could submit tROIs (or budgets) for even just two different sub-markets of the platform, it opens opportunities for strategic manipulations and many aforementioned properties technically break.
This, however, does not mean that second price auction could immediately regain the upper hand as in the single-item regime.
Though it cannot be easily stated and verified in a fully rigorous manner, it is reasonable to hypothesize that: with advertisers having access to a wider range of heterogeneous sub-markets, second price auction could reduce strategic behaviors more effectively and bring better market outcomes; conversely, if optional sub-markets are coarse-grained and advertisers do not have enough knowledge to differentiate them, the market could still be well captured by our model and first price auction mostly retains its dominant position.


As for Google's partial shift of auction format,
the most distinctive feature that sets Search and Shopping apart from Content, Video and Games is the way in which items (both sponsored and non-sponsored) are distributed.
For the former, users come with a clear \textit{intent} expressed through the \textit{keywords} they input.
Even though there is still personalization, keywords largely determine the items to be recommended,  and each keyword characterizes a sub-market of which advertisers have a fairly good understanding.
In contrast, for Content, Video and Games, there is typically no such context (available to advertisers) and users' general interests are highly personal.
The platform can learn a lot of information from users' interactions with the items, while advertisers cannot manipulate much to optimize every penny they spend in this more centralized market.
In this case, first price auction should be more desirable.

\end{document}